\newtheorem{theorem}{Theorem}
\numberwithin{equation}{section}
\def \be {\begin{equation}}
\def \ee {\end{equation}}
\def \ba {\begin{array}}
\def \ea {\end{array}}
\def \bea{\begin{eqnarray}}
\def \eea{\end{eqnarray}}
\def \S {\Sigma}
\def \tr {\textrm{tr}}
\def \and {{~\textrm{and}~}}
\begin{document}
\sloppy

\newpage
\setcounter{page}{1}

\begin{center}{\Large \textbf{
Matrix Elements of Fermionic Gaussian Operators in Arbitrary Pauli Bases: A Pfaffian Formula\\
}}\end{center}

\begin{center}
M. A. Rajabpour\textsuperscript{1$\star$}
MirAdel Seifi MirJafarlou\textsuperscript{1} and
Reyhaneh Khasseh\textsuperscript{2},

\end{center}

\begin{center}
{\bf 1} Universidade Federal Fluminense, Niter\'oi, Brazil
\\
{\bf 2} Theoretical Physics III, Center for Electronic Correlations and Magnetism, Institute of Physics, University of Augsburg, D-86135 Augsburg, Germany
\end{center}

\begin{center}
\today
\end{center}


\section*{Abstract}
Fermionic Gaussian operators are foundational tools in quantum many-body theory, numerical simulation of fermionic dynamics, and fermionic linear optics. While their structure is fully determined by two-point correlations, evaluating their matrix elements in arbitrary local spin bases remains a nontrivial task, especially in applications involving quantum measurements, tomography, and basis-rotated simulations. In this work, we derive a fully explicit and general Pfaffian formula for the matrix elements of fermionic Gaussian operators between arbitrary Pauli product states. Our approach introduces a pair of sign-encoding matrices whose classification leads to a Lie algebra isomorphic to $\mathfrak{so}(2L)$. This algebraic structure not only guarantees consistency of the Pfaffian signs but also reveals deep connections to Clifford algebras. The resulting framework enables scalable computations across diverse fields---from quantum tomography and entanglement dynamics to algebraic structure in fermionic circuits and matchgate computation. Beyond its practical utility, our construction sheds light on the internal symmetries of Gaussian operators and offers a new lens through which to explore their role in quantum information and computational models.

\vspace{10pt}
\noindent\rule{\textwidth}{1pt}
\tableofcontents
\noindent\rule{\textwidth}{1pt}
\vspace{10pt}

\section{Introduction}\label{sec:Introduction}
Fermionic Gaussian states and operators form a foundational class of quantum objects at the core of non-interacting fermionic systems\cite{Thouless1960,LIEB1961407,Peschel2001,Kitaev2003,TerhalDiVincenzo2002,Zanardi2008,Surace2022,Terhal2023,Peschel2003,PeschelEisler2009}. Entirely characterized by their two-point correlation functions—encoded in covariance matrices—they describe a wide range of physical states, including ground and thermal states of quadratic Hamiltonians\cite{Surace2022,Kraus2010,Finster2009,Hackl2021,Boutin2021,Bravyi2017}, as well as pure states evolving under free-fermion dynamics\cite{Calabrese2007,CEF2011}. Their mathematical structure enables the use of Wick’s theorem\cite{Gaudin1960,BalianBrezin1969,Perk1984}, leading to efficient evaluation of observables \cite{Pfeuty1970,Barouch1971} and entanglement measures \cite{Peschel2001,Kitaev2003} through compact Pfaffian expressions\cite{TKR2024,Rajabpour2025}, see \cite{Surace2022,Hackl2021} for reviews. These properties make them especially valuable in quantum many-body theory and the  simulation of fermionic models.

An interesting development in the classical simulation of fermionic systems is the introduction of matchgate circuits—quantum circuits composed of a special class of two-qubit gates that preserve Gaussianity. Under the Jordan–Wigner transformation, these circuits implement fermionic Gaussian unitaries and can be simulated efficiently on a classical computer under specific conditions \cite{Knill2001,JozsaMiyake2008,CaiGorenstein2014,BravyiKitaev2002,Bravyi2005,Bravyi2012}. The underlying algebraic structure has been shown to correspond to fermionic linear optics, with matchgate circuits generating a group isomorphic to $\mathrm{SO}(2L)$ and Lie algebra $\mathfrak{so}(2L)$. This algebraic structure supports their classical simulability and reveals intrinsic links to Clifford algebras and the geometry of orthogonal transformations \cite{Bravyi2005,JozsaMiyake2008}.
Recent research has expanded the utility of matchgate circuits beyond simulation alone, including inner products with fermionic Gaussian states and expectation values of local fermionic operators~\cite{Gorman2022,Wan2023,Heyraud2025}. 
These advancements have  implications for quantum tomography and the mitigation of the fermion sign problem in quantum Monte Carlo simulations. Furthermore, studies have explored the classical simulation of matchgate circuits augmented with non-Gaussian elements, providing algorithms that retain polynomial complexity under certain conditions.

While these techniques enable efficient numerical estimation of specific observables and overlaps—particularly those compatible with Gaussian structure—they do not offer a general analytical framework for evaluating matrix elements of fermionic Gaussian operators in arbitrary Pauli product bases. In such bases, the loss of manifest Gaussianity complicates direct computation, especially in contexts involving quantum measurements, tomography, and basis-rotated simulations. In this work, we overcome this difficulty by deriving a fully explicit and general Pfaffian formula for such matrix elements.


Traditionally, most of the calculations are carried out in the computational basis, where fermionic modes are associated with the occupation numbers of qubits via the Jordan–Wigner transformation~\cite{JordanWigner1928,LIEB1961407}:
\begin{equation}\label{eq:JW}
    c_l = \prod_{j<l} (-\sigma_j^z)\,\sigma_l^-, 
    \quad
    c_l^\dagger = \prod_{j<l} (-\sigma_j^z)\,\sigma_l^+.
\end{equation}
Since $\sigma_l^z = 2\,c_l^\dagger c_l - 1$, one obtains the representation of any Gaussian operator in the $\{\ket{\uparrow}_z,\ket{\downarrow}_z\}$ basis by the simple substitutions $\ket{0}\mapsto\ket{\downarrow}_z$, $\ket{1}\mapsto\ket{\uparrow}_z$.  However, many physical and experimental protocols—such as randomized measurements ~\cite{Elben2019,Brydges2019,Elben2023}, basis‐rotated tomography \cite{Leone2024}, and post‐measurement entanglement studies~\cite{R2015,Najafi2016,Baweja2024,Rajabpour2025}—require evaluating these operators in arbitrary local spin bases. To do so, one applies a product of single‐site unitaries,
\begin{equation}\label{eq:rot}
    \mathcal{G}^{(\boldsymbol{\phi},\boldsymbol{\theta},\boldsymbol{\alpha})}
    \;=\;
    U_{(\boldsymbol{\phi},\boldsymbol{\theta},\boldsymbol{\alpha})}\,
    \mathcal{G}\,
    U_{(\boldsymbol{\phi},\boldsymbol{\theta},\boldsymbol{\alpha})}^\dagger,
    \quad
    U_{(\boldsymbol{\phi},\boldsymbol{\theta},\boldsymbol{\alpha})}
    = 
    \bigotimes_{l=1}^L
    U_{(\phi_l,\theta_l,\alpha_l)},
\end{equation}
where each
\begin{equation}\label{eq:U}
    U_{(\phi,\theta,\alpha)}
    =
    \begin{pmatrix}
      \cos\tfrac{\theta}{2} & \sin\tfrac{\theta}{2}\,e^{-i\phi} \\
      \sin\tfrac{\theta}{2}\,e^{-i\alpha} & -\cos\tfrac{\theta}{2}\,e^{-i(\alpha+\phi)}
    \end{pmatrix}
\end{equation}
rotates the local $\sigma^z$‐basis into the desired Pauli direction.  While in principle one can expand the rotated operator into Pauli strings, this expansion typically involves exponentially many terms.  Hence, compact and exact Pfaffian formulas for matrix elements in arbitrary spin bases are both highly desirable and nontrivial to derive.

In this work, we derive an explicit Pfaffian formula for the matrix elements of fermionic Gaussian operators in arbitrary local Pauli bases.  Our result allows one to compute 
\begin{equation}
    \langle \mathcal{S} \,|\, \mathcal{G}\,|\,\mathcal{S}'\rangle
\end{equation}
 without ever reverting to the computational basis.  The central technical achievement is a consistent scheme for assigning Pfaffian signs in rotated bases, implemented via a pair of combinatorial sign‐encoding matrices \(\boldsymbol{\Sigma}\) and \(\boldsymbol{\Sigma}'\).

These matrices capture all antisymmetric contractions of fermionic operators and close under a finite group whose Lie algebra is isomorphic to \(\mathfrak{so}(2L)\), the algebra of real orthogonal transformations in \(2L\) dimensions.  This algebraic identification not only guarantees sign‐consistency but also forges a clear bridge between fermionic Gaussian operators, Clifford–Majorana representations, and the geometry of orthogonal groups.

Our framework substantially generalizes prior Pfaffian methods for pure Gaussian states.  In~\cite{TKR2024}, introduced a Pfaffian‐based scheme for amplitudes of Gaussian pure states in the \(\sigma^x\) and \(\sigma^y\) directions via duality and domain‐wall constructions. More recently, ~\cite{Rajabpour2025} gave a fully explicit Pfaffian formula for pure‐state amplitudes in arbitrary Pauli bases and a recursive scaling relation across system sizes.  By contrast, the present work applies to the full class of Gaussian operators—including density matrices of subsystems and unitary evolution operators—and provides a single closed‐form expression for any pair of local spin configurations.

From a computational standpoint, our formula is efficient and scalable: it enables exact and efficient evaluation of formation probabilities~\cite{Shiroishi2001,Franchini2005,Stephan2013,NR2016,Rajabpour2020,Rajabpour2021,MortezaRajab2020}, Shannon–Rényi entropies~\cite{Stephan2009,Stephan2010,Alcaraz2013,Alcaraz2014,NR2016,Tarighi2022,Jiaju2023,central-charge}, post‐measurement updates~\cite{R2015,Najafi2016,Rajabpour2016,Baweja2024,Rajabpour2025}, and other observables in rotated Pauli bases with only \(O(L^3)\) Pfaffian or matrix operations.  Structurally, it deepens the connection between fermionic linear‐optical circuits, Lie algebras, and spin‐basis transformations, and suggests new ansatzes for hybrid spin–fermion simulations and tomography.


The rest of the paper is structured as follows: In Section~\ref{sec:basics}, we introduce fermionic Gaussian operators, define their general form, and discuss relevant algebraic constraints. Section~\ref{sec:computational-basis} presents a closed-form Pfaffian formula for matrix elements in the computational basis using a Grassmann integral approach. While aspects of the theorem presented in this section  may be known in certain restricted cases, to the best of our knowledge, the general formulation presented here has not previously appeared in the literature. In Section~\ref{sec:sigmaz-basis}, we derive an alternative Pfaffian representation in the $\sigma^z$ spin basis and introduce the sign-encoding matrices $\boldsymbol{\Sigma}$ and $\boldsymbol{\Sigma}'$. Section~\ref{sec:arbitrary-bases} generalizes the construction to arbitrary local Pauli bases, providing a fully explicit formula for matrix elements. Section~\ref{sec:Sigma-matrices} explores the combinatorial and algebraic structure of the sign matrices, including their Lie algebra closure and Clifford algebra embedding. We conclude in Section~~\ref{sec:conclusion} with a summary of results and a discussion of further directions.
\section{Fermionic Gaussian Operators: Definitions and Basics}\label{sec:basics}
In this section, we introduce the basic formalism of fermionic Gaussian operators and establish the notations used throughout the paper. In physical contexts, these operators are either Hermitian, when dealing with mixed Gaussian states, or unitary, when considering the evolution of a system under free fermionic Hamiltonians. Although Gaussian operators can be defined abstractly via the Wick-Gaudin theorem~\cite{Gaudin1960,Perk1984}—particularly for mixed states—their explicit structure is not always readily apparent. For simplicity, we will adopt the definition of a Gaussian operator as an explicit exponential with a quadratic expression in terms of creation and annihilation operators as its argument. Based on this reasoning, we define the fermionic Gaussian operator as follows:
\begin{equation}\label{Generic Gaussian operator}
   \mathcal{G}_{\bm{\mathcal{M}}} = \exp\!\Biggl[\frac{1}{2}\begin{pmatrix} \mathbf{c}^{\dagger} & \mathbf{c} \end{pmatrix} \bm{\mathcal{M}} \begin{pmatrix} \mathbf{c} \\ \mathbf{c}^{\dagger} \end{pmatrix}\Biggr],
\end{equation}
where $\bigl(\mathbf{c}^{\dagger}, \mathbf{c}\bigr) = \Bigl(c_1^{\dagger}, c_2^{\dagger}, \dots, c_L^{\dagger}, c_1, c_2, \dots, c_L\Bigr)$ and \(\bm{\mathcal{M}}\) is a \(2L \times 2L\) matrix. Without loss of generality, we assume that:
\begin{equation}
\boldsymbol{\Xi}\bm{\mathcal{M}} + \bigl(\boldsymbol{\Xi}\bm{\mathcal{M}}\bigr)^T = 0,
\end{equation}
with
\begin{equation}
\boldsymbol{\Xi} = \begin{pmatrix} \mathbf{0} & \mathbf{I} \\ \mathbf{I} & \mathbf{0} \end{pmatrix}.
\end{equation}
A number‑conserving special case of the Gaussian operator can be written as follows:
\begin{equation}\label{special Gaussian operator}
   \mathcal{G}_{\mathbf{A}} = \exp\!\Big[\mathbf{c}^{\dagger} \mathbf{A} \mathbf{c}\Big],
\end{equation}
where $\mathbf{A}$ can be any $L\times L$ general matrix. When working with Gaussian mixed states, it is necessary to impose the additional conditions of Hermiticity and normalization. In such cases, the following holds:
\begin{equation}\label{GMS fermionic operator}
   \boldsymbol{\rho}_{\bm{\mathcal{M}}} = \frac{1}{Z_{\bm{\mathcal{M}}}} \exp\!\Biggl[\frac{1}{2}\begin{pmatrix} \mathbf{c}^{\dagger} & \mathbf{c} \end{pmatrix} \bm{\mathcal{M}} \begin{pmatrix} \mathbf{c} \\ \mathbf{c}^{\dagger} \end{pmatrix}\Biggr],
\end{equation}
where
\begin{equation}
\boldsymbol{\Xi}\bm{\mathcal{M}} + \bigl(\boldsymbol{\Xi}\bm{\mathcal{M}}\bigr)^T = 0 \quad \text{and} \quad \bm{\mathcal{M}}^{\dagger} = \bm{\mathcal{M}}.
\end{equation}
The normalization factor \(Z_{\bm{\mathcal{M}}}\) is determined by
\begin{equation}
\begin{split}
Z_{\bm{\mathcal{M}}} &= \text{tr}\,\exp\!\Biggl[\frac{1}{2}\begin{pmatrix} \mathbf{c}^{\dagger} & \mathbf{c} \end{pmatrix} \bm{\mathcal{M} }\begin{pmatrix} \mathbf{c} \\ \mathbf{c}^{\dagger} \end{pmatrix}\Biggr] \\
&\;=\det\!\left(\mathbf{I}+e^{\bm{\mathcal{M}}}\right)^{\frac{1}{2}}.
\end{split}
\end{equation}
In the next section, we derive explicit Pfaffian expressions for matrix elements of Gaussian operators in the computational (i.e., $\sigma^z$) basis.
\section{Matrix Elements in the Computational Basis}\label{sec:computational-basis}
In this section, we derive explicit formulas for the matrix elements of Gaussian operators in the computational basis. To achieve this, we first express the operator in the fermionic coherent state basis. By setting certain Grassmann variables to zero and integrating over the remaining ones, we obtain an explicit Pfaffian formula. This Pfaffian formula provides all the matrix elements of the Gaussian operator.

\begin{theorem}[Matrix Elements of Fermionic Gaussian Operators in the Computational Basis]
Let $\mathcal{G}_{\bm{\mathcal{M}}}$ be a fermionic Gaussian operator acting on an $L$-mode fermionic Fock space, defined as
\begin{equation}\label{eq:gaussian_operator}
\mathcal{G}_{\bm{\mathcal{M}}} = \exp\!\Biggl[\frac{1}{2}\begin{pmatrix} \mathbf{c}^{\dagger} & \mathbf{c} \end{pmatrix} \bm{\mathcal{M}} \begin{pmatrix} \mathbf{c} \\ \mathbf{c}^{\dagger} \end{pmatrix}\Biggr],
\end{equation}
with $\mathbf{c} = (c_1, c_2, \dots, c_L)^T$ and $\bm{\mathcal{M}}$ a $2L\times 2L$ matrix satisfying the appropriate symmetry conditions. Assume that the exponential $e^{\bm{\mathcal{M}}}$ admits the block decomposition
\begin{equation}\label{eq:matrixM}
e^{\bm{\mathcal{M}}} = \begin{pmatrix} \mathbf{T}_{11} & \mathbf{T}_{12} \\ \mathbf{T}_{21} & \mathbf{T}_{22} \end{pmatrix},
\end{equation}
and define
\begin{equation}\label{eq:matricesXYZ}
\mathbf{X} = \mathbf{T}_{12} (\mathbf{T}_{22})^{-1}, \quad \mathbf{Z} = (\mathbf{T}_{22})^{-1}\mathbf{T}_{21}, \quad e^{-\mathbf{Y}} = \mathbf{T}_{22}^T.
\end{equation}
For any subset $\mathcal{I}\subset\{1,2,\dots,L\}$, define the computational basis state
\begin{equation}\label{eq:computationalI}
\ket{\mathcal{I}} = c^\dagger_{i_1} c^\dagger_{i_2}\cdots c^\dagger_{i_p}\ket{0}, \quad \text{with } \mathcal{I}=\{i_1,i_2,\dots,i_p\}\ (i_1 < i_2 < \cdots < i_p),
\end{equation}
and for another subset $\mathcal{J}\subset\{1,2,\dots,L\}$ define the dual state
\begin{equation}\label{eq:computationalII}
\bra{\mathcal{J}} = \bra{0}\, c_{j_p}\cdots c_{j_2} c_{j_1}, \quad \text{with } \mathcal{J}=\{j_1,j_2,\dots,j_p\}\ (j_1 < j_2 < \cdots < j_p).
\end{equation}
Introduce the partitions
\begin{equation}\label{eq:partitions}
\mathcal{I}_1 \equiv \mathcal{I}, \quad \mathcal{I}_0 \equiv \{1,2,\dots,L\}\setminus \mathcal{I}, \quad \mathcal{J}_1 \equiv \mathcal{J}, \quad \mathcal{J}_0 \equiv \{1,2,\dots,L\}\setminus \mathcal{J}.
\end{equation}
Then, the matrix element of $\mathcal{G}_{\bm{\mathcal{M}}}$ between the computational basis states is given by:

\begin{equation}\label{eq:matrix_element-computational}
\langle \mathcal{J}|\mathcal{G}_{\bm{\mathcal{M}}}|\mathcal{I} \rangle = (-1)^{\frac{|\mathcal{I}_1|(|\mathcal{I}_1|+2|\mathcal{J}_1|+1)}{2}}\, \operatorname{det}\Bigl[T_{22}\Bigr]^{\frac{1}{2}}\operatorname{pf}\!\Bigl[\bm{\mathcal{A}}_{\mathcal{J}_0\mathcal{I}_0}\Bigr],
\end{equation}
where the $2L\times 2L$ matrix $\bm{\mathcal{A}}$ is defined as
\begin{equation}\label{eq:matrixA}
\bm{\mathcal{A}} = \begin{pmatrix} \mathbf{X} & e^{\mathbf{Y}} \\ -e^{\mathbf{Y}^T} & \mathbf{Z} \end{pmatrix},
\end{equation}
and $\bm{\mathcal{A}}_{\mathcal{J}_0\mathcal{I}_0}$ denotes the submatrix of $\bm{\mathcal{A}}$ obtained by deleting the rows corresponding to indices in $\mathcal{J}_0$ and the rows indexed by $L+\mathcal{I}_0$, and analogously for the columns.
\end{theorem}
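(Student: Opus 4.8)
The plan is to pass through the fermionic coherent-state (Grassmann) representation, where Gaussian operators have a manifestly exponential kernel, and then extract the matrix element by projecting onto occupation-number states via Grassmann differentiation/integration. Concretely, I would start from the standard normal-ordered form of $\mathcal{G}_{\bm{\mathcal{M}}}$: using the block decomposition of $e^{\bm{\mathcal{M}}}$ in \eqref{eq:matrixM}, one writes $\mathcal{G}_{\bm{\mathcal{M}}} = \exp(\tfrac12 \mathbf{c}^\dagger \mathbf{X}\, \mathbf{c}^\dagger)\, \exp(\mathbf{c}^\dagger \mathbf{Y}\, \mathbf{c})\, \exp(\tfrac12 \mathbf{c}\, \mathbf{Z}\, \mathbf{c})$ up to the scalar prefactor $\det[\mathbf{T}_{22}]^{1/2}$, with $\mathbf{X},\mathbf{Y},\mathbf{Z}$ as in \eqref{eq:matricesXYZ} — this is the Gaussian-operator analogue of the Baker–Campbell–Hausdorff/Thouless decomposition, and the identification $e^{-\mathbf{Y}}=\mathbf{T}_{22}^T$, $\mathbf{X}=\mathbf{T}_{12}\mathbf{T}_{22}^{-1}$, $\mathbf{Z}=\mathbf{T}_{22}^{-1}\mathbf{T}_{21}$ follows by matching the action of both sides on coherent states (equivalently, by computing how $e^{\bm{\mathcal{M}}}$ acts on the mode operators in the Heisenberg-like sense). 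Then the coherent-state kernel $\langle \bm{\bar\eta}\,|\,\mathcal{G}_{\bm{\mathcal{M}}}\,|\,\bm{\xi}\rangle$ is an exponential of a quadratic form in the Grassmann variables $\bm{\bar\eta},\bm{\xi}$, with coefficient matrix built from $\mathbf{X}, e^{\mathbf{Y}}, \mathbf{Z}$.

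Next I would read off $\langle \mathcal{J}|\mathcal{G}_{\bm{\mathcal{M}}}|\mathcal{I}\rangle$ from this kernel. The occupation state $|\mathcal{I}\rangle$ picks out, in the $\bm{\xi}$ variables, a factor $\prod_{i\in\mathcal{I}_1}\xi_i$ (the occupied modes) and sets $\xi_i=0$ for $i\in\mathcal{I}_0$; dually $\langle\mathcal{J}|$ picks out $\prod_{j\in\mathcal{J}_1}\bar\eta_j$ and sets $\bar\eta_j=0$ for $j\in\mathcal{J}_0$. Integrating (Berezin) over the remaining Grassmann variables — those indexed by $\mathcal{I}_0$ on the ket side and $\mathcal{J}_0$ on the bra side — against the exponential of the quadratic form produces, by the standard Gaussian Grassmann-integral identity, a Pfaffian of the submatrix of the coefficient matrix restricted to exactly those surviving indices. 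Collecting the coefficient blocks, that coefficient matrix is precisely $\bm{\mathcal{A}}$ in \eqref{eq:matrixA} with the block structure $\begin{pmatrix}\mathbf{X} & e^{\mathbf{Y}} \\ -e^{\mathbf{Y}^T} & \mathbf{Z}\end{pmatrix}$, and the surviving index set is $\mathcal{J}_0$ in the first block together with $L+\mathcal{I}_0$ in the second, which is exactly the submatrix $\bm{\mathcal{A}}_{\mathcal{J}_0\mathcal{I}_0}$. The scalar $\det[\mathbf{T}_{22}]^{1/2}$ rides along from the normal-ordering step.

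The main obstacle — and the part I would treat most carefully — is bookkeeping the overall sign $(-1)^{|\mathcal{I}_1|(|\mathcal{I}_1|+2|\mathcal{J}_1|+1)/2}$. Several independent sign sources must be tracked and combined: (i) the ordering convention in $|\mathcal{I}\rangle = c^\dagger_{i_1}\cdots c^\dagger_{i_p}|0\rangle$ versus $\bra{\mathcal{J}} = \bra{0}c_{j_p}\cdots c_{j_1}$ and how each matches the coherent-state resolution of identity $\int d\bar\eta\, d\eta\, e^{-\bar\eta\eta}|\eta\rangle\langle\bar\eta|$; (ii) the sign from reordering the extracted monomials $\prod_{i\in\mathcal{I}_1}\xi_i$ and $\prod_{j\in\mathcal{J}_1}\bar\eta_j$ so that the Berezin measure on the \emph{remaining} ($\mathcal{I}_0$, $\mathcal{J}_0$) variables can be applied in a fixed canonical order; (iii) the sign relating the Berezin integral of $e^{\frac12 v^T M v}$ over an ordered set of variables to $\operatorname{pf}(M)$ for the chosen index ordering (interleaving the $\bar\eta$ and $\xi$ blocks versus keeping them separated changes the Pfaffian by a sign depending on $|\mathcal{I}_0|,|\mathcal{J}_0|$, hence on $|\mathcal{I}_1|,|\mathcal{J}_1|$ modulo $L$). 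I would fix all orderings once at the outset (bra indices before ket indices, increasing within each), compute each contribution as a function of $|\mathcal{I}_1|$ and $|\mathcal{J}_1|$ using $\binom{k}{2}\bmod 2$ identities, and check that the total collapses to the stated exponent. A useful sanity check is the vacuum-to-vacuum case $\mathcal{I}=\mathcal{J}=\varnothing$: then $|\mathcal{I}_1|=|\mathcal{J}_1|=0$, the sign is $+1$, $\bm{\mathcal{A}}_{\mathcal{J}_0\mathcal{I}_0}=\bm{\mathcal{A}}$, and the formula must reduce to $\langle 0|\mathcal{G}_{\bm{\mathcal{M}}}|0\rangle = \det[\mathbf{T}_{22}]^{1/2}\operatorname{pf}(\bm{\mathcal{A}})$, which is a known identity and pins down the convention; similarly, checking a single occupied mode fixes the linear-in-$|\mathcal{J}_1|$ cross term.
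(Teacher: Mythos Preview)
Your approach is essentially the paper's own: Balian--Brezin (normal-ordered) decomposition, coherent-state kernel, then Berezin integration after setting the unoccupied-mode Grassmann variables to zero. One bookkeeping slip to flag: the Berezin integration is over the \emph{occupied}-mode variables $\mathcal{I}_1,\mathcal{J}_1$ (not $\mathcal{I}_0,\mathcal{J}_0$), and correspondingly the surviving index set of the Pfaffian is $\mathcal{J}_1\cup(L+\mathcal{I}_1)$; the notation $\bm{\mathcal{A}}_{\mathcal{J}_0\mathcal{I}_0}$ means $\mathcal{J}_0$ and $L+\mathcal{I}_0$ are \emph{deleted}, not kept. Your vacuum sanity check exposes the swap: for $\mathcal{I}=\mathcal{J}=\varnothing$ the submatrix is empty (Pfaffian equal to $1$), giving $\langle 0|\mathcal{G}_{\bm{\mathcal{M}}}|0\rangle=\det[\mathbf{T}_{22}]^{1/2}$, which follows directly from the normal-ordered form; it is the fully occupied case $\mathcal{I}=\mathcal{J}=\{1,\dots,L\}$ that involves $\operatorname{pf}(\bm{\mathcal{A}})$.
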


\begin{proof}[Proof]The proof proceeds in the following three steps.

\paragraph{Step 1: Balian--Brezin Decomposition.} Begin with the Balian--Brezin decomposition \cite{BalianBrezin1969} of $\mathcal{G}_{\bm{\mathcal{M}}}$:
\begin{equation}\label{eq:BB_decomp}
\mathcal{G}_\mathcal{M} = \exp\!\Bigl[\frac{1}{2}\,\mathbf{c}^{\dagger}\,\mathbf{X}\,\mathbf{c}^{\dagger}\Bigr]\, \exp\!\Bigl[\mathbf{c}^{\dagger}\,\mathbf{Y}\,\mathbf{c} - \frac{1}{2}\operatorname{Tr}\mathbf{Y}\Bigr]\, \exp\!\Bigl[\frac{1}{2}\,\mathbf{c}\,\mathbf{Z}\,\mathbf{c}\Bigr],
\end{equation}
with
\[
\mathbf{X} = \mathbf{T}_{12} (\mathbf{T}_{22})^{-1}, \quad \mathbf{Z} = (\mathbf{T}_{22})^{-1}\mathbf{T}_{21}, \quad e^{-\mathbf{Y}} = \mathbf{T}_{22}^T,
\]
which follows from the block decomposition of $e^{\bm{\mathcal{M}}}$. Note that $\mathbf{X}$ and $\mathbf{Z}$ are antisymmetric matrices. To compute the matrix elements in the computational basis, express $\mathcal{G}_{\bm{\mathcal{M}}}$ in the fermionic coherent state basis $\{\ket{\boldsymbol{\xi}}\}$, where the creation and annihilation operators are replaced by Grassmann variables.

\bigskip

\paragraph{Step 2: Coherent State Representation of Basis States.} In the coherent state representation, the computational basis states in the fermionic Fock space are expressed via Berezin integrals:
\begin{eqnarray}\label{eq:coherentstateI-J}
\ket{\mathcal{I}} &=& (-1)^{\frac{|\mathcal{I}_1|(|\mathcal{I}_1|+1)}{2}} \int \prod_{i\in \mathcal{I}_1} d\xi'_i \; \ket{\boldsymbol{\xi}'(\mathcal{I})},\\
\bra{\mathcal{J}} &=& \int \prod_{j\in \mathcal{J}_1} d\xi_j \; \bra{\boldsymbol{\xi}(\mathcal{J})},
\end{eqnarray}
where in $\ket{\boldsymbol{\xi}'(\mathcal{I})}$ the Grassmann variables corresponding to the unoccupied sites $\mathcal{I}_0 = \{1,2,\dots,L\}\setminus\mathcal{I}$ are set to zero, and similarly for $\bra{\boldsymbol{\xi}(\mathcal{J})}$.

\bigskip

\paragraph{Step 3: Berezin Integration and Pfaffian Form.} Expressing $\mathcal{G}_{\bm{\mathcal{M}}}$ in the coherent state basis leads to an integral over Grassmann variables. Performing the Berezin integrations over the occupied sites (with the variables corresponding to unoccupied sites set to zero) yields an expression in which the contributions combine into a Pfaffian. The required reordering of the Grassmann variables introduces an overall sign factor
\[
(-1)^{\frac{|\mathcal{I}_1|(|\mathcal{I}_1|+2|\mathcal{J}_1|+1)}{2}}.
\]
Moreover, the structure of the integrals produces the Pfaffian of a submatrix of the $2L\times 2L$ matrix \bm{\mathcal{A}}, defined in Eq.~\eqref{eq:matrixA}. The submatrix $\bm{\mathcal{A}}_{\mathcal{I}_0\mathcal{J}_0}$ is obtained by deleting the rows corresponding to indices in $\mathcal{J}_0$ and the rows corresponding to $L+\mathcal{I}_0$, and analogously for the columns.

\end{proof}

\subsection{Particle-Preserving Gaussian Operators}
When the Gaussian operator is particle preserving, i.e. (\ref{special Gaussian operator}), then the equations simplify significantly. We have
\begin{equation}\label{density matrix computational basis}
\bra{\mathcal J}\,\mathcal G_{\mathbf A}\,\ket{\mathcal I}
=
\begin{cases}
\displaystyle
\det\!\bigl[(e^{A})_{\mathcal J_0,\mathcal I_0}\bigr],
&|\mathcal I_0|=|\mathcal J_0|,\\[8pt]
0,&\text{otherwise.}
\end{cases}
\end{equation}
Here, $|\mathcal I_0|$ (resp.\ $|\mathcal J_0|$) denotes the number of unoccupied sites in the basis $\ket{\mathcal I}$ (resp.\ $\bra{\mathcal J}$), and $(e^{A})_{\mathcal J_0,\mathcal I_0}$ is the submatrix (minor) of $e^{A}$ obtained by deleting rows $\mathcal J_0$ and columns $\mathcal I_0$.
\subsection{Gaussian Mixed State}
When dealing with a Gaussian mixed state, we have
\begin{eqnarray}
\mathbf{X}=\mathbf{Z}^{\dagger},\hspace{1cm} \mathbf{Y}^{\dagger}=\mathbf{Y}.
\end{eqnarray}
Note that for diagonal elements the extra sign in the equation (\ref{density matrix computational basis}) disappears. When all the elements of the matrix $\mathcal{M}$ are real, further simplifications occur. For this case the matrix \(\bm{\mathcal{A}}\) can be written as
\begin{equation}\label{eq:Amatrixreal}
\bm{\mathcal{A}}= \begin{pmatrix} \mathbf{F}_a & \mathbf{F}_s \\ -\mathbf{F}_s & -\mathbf{F}_a \end{pmatrix},
\end{equation}
with \(\mathbf{F}_s\) and \(\mathbf{F}_a\) being the symmetric and antisymmetric parts of the matrix \(\mathbf{F}\) (which itself can be expressed in terms of the correlation matrix \(\mathbf{G}\) by \(\mathbf{F} = (\mathbf{I}+\mathbf{G}).({\mathbf{I}-\mathbf{G}})^{-1}\)). The correlation matrix itself is defined as
\begin{equation}\label{G}
G_{jk}=\tr[\rho_{\mathcal{M}} (c_j^{\dagger}-c_j)(c_k^{\dagger}+c_k)].    
\end{equation}
Then one can  write the diagonal elements of the mixed state as \cite{MortezaRajab2020}
\begin{equation}{\label{Prozreal}}
    P_{\mathcal{I}} = \det[\frac{\bold{I}-\bold{I}_{\mathcal{I}}\cdot\bold{G}}{2}],
\end{equation}
where the matrix $\bold{I}_{\mathcal{I}}$ is diagonal, composed of $\pm1$. We assign a diagonal element of $-1$ in cases where a fermion is present, and $1$ in instances where there is an absence of a fermion at the relevant site. The off-diagonal elements of the mixed state in this case do not appear to have a straightforwards relation with the correlation matrix $\bold{G}$. 

\section{Pfaffian Form of Gaussian Operators in the $\sigma^z$ Spin Basis}\label{sec:sigmaz-basis}

In this section, we present an alternative expression for the Gaussian operators in the  $\sigma^z$ basis. This formulation lays the groundwork for deriving the most general expression of Gaussian operators in arbitrary Pauli bases. Here, we denote the basis states by
\begin{equation}
\bra{\mathcal{S}} = \bra{s_L, \ldots, s_2, s_1} \quad \text{and} \quad \ket{\mathcal{S}'} = \ket{s'_{1}, s'_{2}, \ldots, s'_{L}}\equiv \ket{s_{L+1}, s_{L+2}, \ldots, s_{2L}},
\end{equation}
where \(s_j\) and \(s'_j\) are expressed in the $\sigma^z$ basis. Note that we associate the value \(\bar{s} = +1\) and \(\bar{s}' = +1\) to spin up, and \(\bar{s} = -1\) and \(\bar{s}' = -1\) to spin down.

\subsection{Matrix Element Representation}

\begin{theorem}[Gaussian Operators  in the $\sigma^z$ basis]
A generic element of a Gaussian operator in $\sigma^z$ basis can be written as:
\begin{equation}\label{eq:matrix_element-sigmaz}
 \bra{\mathcal{S}} \mathcal{G}_{\bm{\mathcal{M}}} \ket{\mathcal{S'}} =   \operatorname{det}\Bigl[T_{22}\Bigr]^{\frac{1}{2}}\bold{pf}\Bigl[\bm{\mathcal{K}}^z(\mathcal{S},\mathcal{S}')\Bigr].
\end{equation}
There exist $2^{2L-1}$ pairs of  \(2L \times 2L\) antisymmetric matrices $\boldsymbol{\Sigma}$ and $\boldsymbol{\Sigma}'$ with off-diagonal elements made of $\pm1$ such that for indices \(n > m\), the elements of the antisymmetric matrix \(\bm{\mathcal{K}}^z(\mathcal{S},\mathcal{S}')\) can be found as:
\begin{equation}\label{eq:M-sigmaz}
 \mathcal{K}^z_{mn}(\mathcal{S},\mathcal{S}') =(\frac{1+\bar{s}_m}{2})(\frac{1+\bar{s}_n}{2}) \Sigma_{m,n}\,\mathcal{A}_{mn}\, 
 +(\frac{1-\bar{s}_m}{2})(\frac{1-\bar{s}_n}{2})  \Sigma'_{m,n}.
\end{equation}
\end{theorem}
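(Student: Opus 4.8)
The plan is to obtain \eqref{eq:matrix_element-sigmaz}--\eqref{eq:M-sigmaz} by repackaging the computational-basis formula \eqref{eq:matrix_element-computational} already established, turning its Pfaffian over a \emph{sub}matrix (indexed by the occupied modes) into the Pfaffian of a \emph{full} $2L\times 2L$ matrix whose configuration dependence is carried entirely by the $\pm1$ sign patterns $\boldsymbol{\Sigma},\boldsymbol{\Sigma}'$. First I would fix the dictionary: under Jordan--Wigner, $\bar s_m=+1$ (spin up) means mode $m$ occupied, $\bar s_m=-1$ means empty, and in the unified labelling ($1,\dots,L$ for the bra, $L+1,\dots,2L$ for the ket) the configuration $\mathcal{S}$ plays the role of $\mathcal{J}$ and $\mathcal{S}'$ that of $\mathcal{I}$. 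Then $O:=\{m:\bar s_m=+1\}$ is an arbitrary even-cardinality subset of $\{1,\dots,2L\}$, the deleted sets of the earlier theorem combine to $\mathcal{J}_0\cup(L+\mathcal{I}_0)=O^{c}=:U$, the submatrix $\bm{\mathcal{A}}_{\mathcal{J}_0\mathcal{I}_0}$ is the principal submatrix $\bm{\mathcal{A}}_{OO}$, and \eqref{eq:matrix_element-computational} reads $\langle\mathcal{S}|\mathcal{G}_{\bm{\mathcal{M}}}|\mathcal{S}'\rangle=(-1)^{\chi(O)}\operatorname{det}[T_{22}]^{1/2}\operatorname{pf}[\bm{\mathcal{A}}_{OO}]$, with $\chi(O)=\tfrac12|\mathcal{I}_1|(|\mathcal{I}_1|+2|\mathcal{J}_1|+1)$.

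The technical core is a block-Pfaffian lemma: for a $2L\times 2L$ antisymmetric $\bm{K}$ whose entries vanish whenever exactly one of the two indices lies in $O$, one has $\operatorname{pf}[\bm{K}]=\varepsilon(O)\,\operatorname{pf}[\bm{K}_{OO}]\,\operatorname{pf}[\bm{K}_{UU}]$, where $\varepsilon(O)=\pm1$ is the signature of the permutation that sorts the elements of $O$ before those of $U$; this follows from $\operatorname{pf}[P\bm{K}P^{T}]=\det[P]\operatorname{pf}[\bm{K}]$ and multiplicativity of the Pfaffian on a block-diagonal matrix, and a short count gives $\varepsilon(O)=(-1)^{\#\{(o,u)\in O\times U:\,o>u\}}=\bigl(\prod_{m\in O}(-1)^{m-1}\bigr)(-1)^{|O|/2}$. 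Because the projectors $\tfrac{1\pm\bar s_m}{2}$ in \eqref{eq:M-sigmaz} force $\mathcal{K}^{z}_{mn}=0$ whenever $\bar s_m\neq\bar s_n$, the matrix $\bm{\mathcal{K}}^{z}$ has exactly this block structure, with $O$-block $(\boldsymbol{\Sigma}\odot\bm{\mathcal{A}})_{OO}$ (here $\odot$ is the entrywise product) and $U$-block $\boldsymbol{\Sigma}'_{UU}$. The lemma then turns \eqref{eq:matrix_element-sigmaz} into the requirement that, for every even $O$ and every admissible $\bm{\mathcal{A}}$,
\[
\varepsilon(O)\,\operatorname{pf}\bigl[(\boldsymbol{\Sigma}\odot\bm{\mathcal{A}})_{OO}\bigr]\,\operatorname{pf}\bigl[\boldsymbol{\Sigma}'_{UU}\bigr]=(-1)^{\chi(O)}\,\operatorname{pf}\bigl[\bm{\mathcal{A}}_{OO}\bigr].
\]

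Since $\bm{\mathcal{A}}$ ranges over a Zariski-dense family of antisymmetric matrices while $\operatorname{pf}[\boldsymbol{\Sigma}'_{UU}]$ is held fixed, this forces entrywise multiplication by $\boldsymbol{\Sigma}$ to rescale every Pfaffian on every even index set by a constant; equivalently, all perfect matchings of each even $O$ carry the same $\boldsymbol{\Sigma}$-weight, which (for $2L\ge4$) is equivalent to all ``triangle products'' $\Sigma_{mn}\Sigma_{np}\Sigma_{pm}$ being equal to a common sign $t_0\in\{\pm1\}$, i.e.\ $\Sigma_{mn}=t_0\,\sigma_m\sigma_n$. Then $\operatorname{pf}[(\boldsymbol{\Sigma}\odot\bm{\mathcal{A}})_{OO}]=t_0^{\,|O|/2}\bigl(\prod_{m\in O}\sigma_m\bigr)\operatorname{pf}[\bm{\mathcal{A}}_{OO}]$ and the requirement collapses to the purely combinatorial identity $\operatorname{pf}[\boldsymbol{\Sigma}'_{UU}]=\varepsilon(O)\,t_0^{\,|O|/2}\bigl(\prod_{m\in O}\sigma_m\bigr)(-1)^{\chi(O)}=:g(U)$ for all even $U$. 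Taking $|U|=2$ already fixes $\Sigma'_{mn}=g(\{m,n\})$, so $\boldsymbol{\Sigma}'$ is uniquely determined by $(\boldsymbol{\sigma},t_0)$; for existence I would exhibit a concrete consistent choice --- absorbing the site-local factor $(-1)^{m-1}$ and the rank-one part of $(-1)^{\chi(O)}$ into $\boldsymbol\sigma$, and realizing the residual $|U|\bmod 4$--type factor by a staircase-times-diagonal pattern $\Sigma'_{mn}=\rho_m\rho_n\operatorname{sgn}(n-m)$, using $\operatorname{pf}\bigl[(\operatorname{sgn}(n-m))_{1\le m,n\le 2k}\bigr]=1$.

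The step I expect to be the main obstacle is precisely this existence/consistency of $\boldsymbol{\Sigma}'$: once its off-diagonal entries are pinned to $g(\{m,n\})$, one must verify that \emph{all} higher sub-Pfaffians of $\boldsymbol{\Sigma}'$ also reproduce $g$, and that this holds for every admissible $\boldsymbol{\sigma}$ in both $t_0$-sectors --- this is not a term-by-term accident but is guaranteed by the closure of the $\boldsymbol{\Sigma}$/$\boldsymbol{\Sigma}'$ matrices under the finite group with Lie algebra $\mathfrak{so}(2L)$ developed in Section~\ref{sec:Sigma-matrices}. Granting it, the count follows: the valid pairs are acted on freely and transitively, within each value of $t_0$, by the gauge group $\{\boldsymbol\eta\in\{\pm1\}^{2L}:\prod_m\eta_m=1\}/\{\pm\mathbf{1}\}$ of order $2^{2L-2}$ (via $\Sigma_{mn}\mapsto\eta_m\eta_n\Sigma_{mn}$, $\Sigma'_{mn}\mapsto\eta_m\eta_n\Sigma'_{mn}$, which conjugates $\bm{\mathcal{K}}^z$ by $\operatorname{diag}(\boldsymbol\eta)$), the single normalization constraint $g(\varnothing)=1$ selects exactly one orbit in each of the two sectors $t_0=\pm1$, and so there are $2^{2L-2}+2^{2L-2}=2^{2L-1}$ pairs. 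Everything else --- the dictionary, the block-Pfaffian lemma, the rank-one reduction, and the sign arithmetic for $\varepsilon(O)$ and $\chi(O)$ --- is routine.
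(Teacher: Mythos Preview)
Your block-Pfaffian reduction and the rank-one characterization of $\boldsymbol{\Sigma}$ via equal perfect-matching weights are exactly what the paper does in its Case~1, so the overall strategy is the paper's own, just phrased more abstractly. The gap is in the existence/consistency step for $\boldsymbol{\Sigma}'$, and it is a real one.

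First, the concrete ansatz $\Sigma'_{mn}=\rho_m\rho_n\operatorname{sgn}(n-m)$ is too simple. For that pattern every sub-Pfaffian is $\operatorname{pf}[\boldsymbol{\Sigma}'_{UU}]=\prod_{m\in U}\rho_m$, a pure product over sites; but your own target $g(U)=\varepsilon(O)\,t_0^{|O|/2}\bigl(\prod_{m\in O}\sigma_m\bigr)(-1)^{\chi(O)}$ depends, through $\chi(O)=\tfrac12|\mathcal I_1|(|\mathcal I_1|+2|\mathcal J_1|+1)$, on how $O$ splits between $\{1,\dots,L\}$ and $\{L{+}1,\dots,2L\}$. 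That half-dependence is not a ``$|U|\bmod4$--type factor'' and cannot be absorbed into a site-diagonal $\rho$. Concretely, the paper's Cases~2--4 pin down $\Sigma'_{mn}=\pm(-1)^{m+n+1}\Sigma_{mn}$ with the overall sign \emph{depending on which block $(m,n)$ lies in} (both $\le L$; straddling $L$; both $>L$); since $(-1)^{m+n+1}=-(-1)^m(-1)^n$ on the first block, no real $\rho_m\in\{\pm1\}$ realises it as $\rho_m\rho_n$. The correct base sign is the region-dependent function $\operatorname{sgn}(i,j)=(-1)^{f(i,j)}$ written out in Appendix~\ref{sec:AppendixB}, not the uniform staircase.

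Second, the fallback to Section~\ref{sec:Sigma-matrices} is circular: Theorem~5 there takes the $\boldsymbol{\Sigma},\boldsymbol{\Sigma}'$ matrices \emph{as already constructed} by the present theorem and shows they generate $\mathfrak{so}(2L)$ under commutators; that statement concerns Lie-algebraic span and says nothing about sub-Pfaffians, so it cannot be invoked to discharge the consistency check here. The paper closes the argument by direct computation instead: after fixing $\boldsymbol{\Sigma}'$ from the three two-down-spin cases (eqs.~\eqref{sigmaprime1}--\eqref{sigmaprime3}), it verifies the general matching identity (eq.~\eqref{sigmaprimeGeneral}) for every perfect matching on $\mathcal J_0\cup\mathcal I_0$. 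It is precisely this verification --- a combinatorial identity in $m_j,n_j,|\mathcal I_1|,|\mathcal J_1|$, not an algebraic-closure argument --- that your proposal defers but does not supply.
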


\begin{proof}[Proof]
The core idea of the proof is to ensure that Equation (\ref{eq:matrix_element-sigmaz}) matches Equation (\ref{eq:matrix_element-computational}). This can be achieved by explicitly examining the matrix elements of the Gaussian operator and determining the matrices $\boldsymbol{\Sigma}$ and $\boldsymbol{\Sigma}'$ such that both equations yield identical results.

\paragraph{Case 1: $\bra{\mathcal{S}}=\bra{++...+}$ and $\ket{\mathcal{S'}}= \ket{++...+}$.}
In this case for \(n > m\) we have $\mathcal{B}_{mn}\equiv \mathcal{K}^z_{mn}(\mathcal{S},\mathcal{S}')=\Sigma_{m,n}\,\mathcal{A}_{mn}$ which its pfaffian should match with $(-1)^{\frac{L(3L+1)}{2}}\bold{pf}[\bm{\mathcal{A}}]$. Recall that the Pfaffian of an antisymmetric matrix \( \bold{A} \) is given by
\begin{equation}\label{eq:pfaffian}
\bold{pf}[\bold{A}]= \sum_{M} \epsilon(M) \prod_{(i,j)\in M} A_{ij},
\end{equation}
where the sum runs over all perfect matchings \( M \) of the set \(\{1,2,\dots,2L\}\) and \(\epsilon(M)\) is a sign factor determined by the matching order.
When we form \( \bm{\mathcal{B}} \) by multiplying the corresponding elements of \( \bm{\mathcal{A}} \) by \(\Sigma_{mn}\), its Pfaffian becomes
\[
\bold{pf}[\bm{\mathcal{B}}]= \sum_{M} \epsilon(M) \prod_{(m,n)\in M} \Sigma_{mn}\, \mathcal{A}_{mn}.
\]
In order for \(\bold{pf}[\bm{\mathcal{B}}]\) to equal \(\bold{pf}[\bm{\mathcal{A}}]\) for $L \text{mod} 4=0,1$ an arbitrary antisymmetric matrix \( \bm{\mathcal{A}} \), the extra sign factors must cancel out for every perfect matching \( M \); that is, we require
\begin{equation}\label{eq:perfectmatchingcondition}
\prod_{(i,j)\in M} \Sigma_{ij} = 1 \quad \text{for every perfect matching } M.
\end{equation}
A sufficient and necessary condition to guarantee this is to express the elements of \(\boldsymbol{\Sigma}\) in the form
\begin{equation}\label{eq:sigmavrtop}
\Sigma_{ij} = p_i\,p_j \quad \text{for } i < j,
\end{equation}
where each \( p_i \in \{+1, -1\} \).

Now, consider any perfect matching \( M \). In such a matching, every index \(1,2,\dots,2L\) appears exactly once. Therefore, the product over the matching is
\begin{equation}
\prod_{(i,j)\in M} \Sigma_{ij} = \prod_{(i,j)\in M} (p_i\,p_j) = \prod_{i=1}^{2L} p_i.
\end{equation}
Thus, for this product to be equal to 1 for every perfect matching, we need to have
\begin{equation}\label{eq:pcondition}
\prod_{i=1}^{2L} p_i = 1.
\end{equation}
Since each \( p_i \) can be independently chosen as \( \pm 1 \), there are \( 2^{2L} \) possible choices for the sequence \( \{p_1, p_2, \dots, p_{2L}\} \). However, the constraint \( \prod_{i=1}^{2L} p_i = 1 \) reduces the number of valid solutions by a factor of 2, i.e. $2^{2L-1}$. However changing the sign of all $p_i$s does not change the matrix $\boldsymbol{\Sigma}$ so we end up to $2^{2L-2}$ solutions. Finally we note that if $\boldsymbol{\Sigma}$ is a solution then $-\boldsymbol{\Sigma}$ should be also a solution which 
yielding exactly
\[
2^{2L-1}
\]
possible matrices \(\boldsymbol{\Sigma}\) that satisfy the condition \(\bold{pf}[\mathcal{B}] = \bold{pf}[\mathcal{A}]\).

 When $L \text{mod}4=2,3$ we should have \(\bold{pf}[\mathcal{B}]=-\bold{pf}[\mathcal{A}]\). Here similar argument works with the condition  
\begin{equation}\label{eq:pcondition2}
\prod_{i=1}^{2L} p_i = -1.
\end{equation}
Again there are $2^{2L-2}$ solutions that can be parametrized as $\Sigma_{ij} = p_i\,p_j \quad \text{for } i < j$. The other $2^{2L-2}$ solutions are just the negative of the above solutions. 

\paragraph{Case 2: $\bra{\mathcal{S}}=\bra{++..-+...+-,+,...+}$ and $\ket{\mathcal{S'}}= \ket{++...+}$}: In this case two of the spins in the configuration $\mathcal{S}$ are down spins. Here we do exactly the same analysis: Equation (\ref{eq:matrix_element-sigmaz}) should match Equation (\ref{eq:matrix_element-computational}). This is possible if

\begin{equation}\label{sigmaprime1}
\Sigma'_{nm}=(-1)^{n+m+1}\Sigma_{nm} ,\hspace{1cm} L\geq m>n.
\end{equation}

\paragraph{Case 3: $\bra{\mathcal{S}}=\bra{++...+-+...+}$ and $\ket{\mathcal{S'}}= \ket{++...+-+...+}$}: In this case one of the spins in each of the configurations $\mathcal{S}$ and $\mathcal{S'}$ are down spins. Exactly the same analysis yields

\begin{equation}\label{sigmaprime2}
\Sigma'_{nm}=(-1)^{L+1}(-1)^{n+m+1}\Sigma_{nm} ,\hspace{1cm}  m>L\geq n.
\end{equation}

\paragraph{Case 4: $\bra{\mathcal{S}}=\bra{++...++}$ and $\ket{\mathcal{S'}}= \ket{++-...+-+...+}$}: In this case two of the spins in the configuration $\mathcal{S}'$  are down spins. Here we have

\begin{equation}\label{sigmaprime3}
\Sigma'_{nm}=-(-1)^{n+m+1}\Sigma_{nm} ,\hspace{1cm}  m>n> L.
\end{equation}

The three equation (\ref{sigmaprime1}), (\ref{sigmaprime2}) and (\ref{sigmaprime3}) determine the $\boldsymbol{\Sigma}'$ matrices with respect to the $\boldsymbol{\Sigma}$ matrix. Since there are $2^{2L-1}$ possible $\boldsymbol{\Sigma}$ matrices one can derive exactly $2^{2L-1}$ number of $\boldsymbol{\Sigma}'$ matrices.

\paragraph{General Case:} 
For general configurations where $\bra{\mathcal{S}}$ has $L-|\mathcal{J}_1|$ number of down spins and $\ket{\mathcal{S'}}$ has $L-|\mathcal{I}_1|$ number of down spins, similar calculation leads to the equality
\begin{equation}\label{sigmaprimeGeneral}
(-1)^{\frac{L(3L+1)}{2}}\prod_{j=1}^{L-\frac{|\mathcal{I}_1|+|\mathcal{J}_1|}{2}}(-1)^{m_j+n_j+1}\Sigma'_{m_jn_j}\Sigma_{m_jn_j}=(-1)^{\frac{|\mathcal{I}_1|(|\mathcal{I}_1|+2|\mathcal{J}_1|+1)}{2}}
\end{equation}
In this expression, the index \( j \) runs over the pairs \( (m_j, n_j) \) forming a perfect matching on the set of unoccupied modes \(\{ \mathcal{I}_0 \cup \mathcal{J}_0 \}\). Remarkably the above equation is consistent with the equations (\ref{sigmaprime1}), (\ref{sigmaprime2}), and (\ref{sigmaprime3}) without putting any extra constraints on the matrices. In other words we found $2^{2L-1}$ pairs of solutions for the $\boldsymbol{\Sigma}$ and $\boldsymbol{\Sigma}'$ matrices.
\end{proof}

We now provide one representation for the $\boldsymbol{\Sigma}$ and $\boldsymbol{\Sigma}'$ matrices. In the Appendix (\ref{sec:AppendixA}) all the possible pairs of $\boldsymbol{\Sigma}$ and $\boldsymbol{\Sigma}'$ matrices for $L=2$ and $3$ are listed.
The entry \(\boldsymbol{\Sigma}_{m,n}\) for \(m<n\) is given by the sign rules in Table~\ref{tab:sigma-rules}.
\subsection{Sign Rules for the Matrices \texorpdfstring{\(\boldsymbol{\Sigma}\)}{Sigma} and \texorpdfstring{\(\boldsymbol{\Sigma}'\)}{Sigma'}}

The signs of the upper-triangular entries of the matrix \(\boldsymbol{\Sigma}\) depend on the value of \(L \bmod 4\), as summarized in Table~\ref{tab:sigma-rules}.
\begin{table}[h]
\centering
\begin{tabular}{c c}
  \toprule
  \rowcolor{gray!20}
  \(L \bmod 4\)
    & \(\boldsymbol{\Sigma}_{m,n}\,(m<n)\)
 \\
  \midrule
  0,\,1 
    & \(1\) \\[1ex]
  2 
    & 
    \(
      \begin{cases}
        1  & \text{if } m = 1,\\
       -1  & \text{if } m \neq 1
      \end{cases}
    \) \\[1ex]
  3 
    & \(-1\) \\
  \bottomrule
\end{tabular}
\caption{The upper‐triangular entries of \(\boldsymbol{\Sigma}\) for different values of \(L \bmod 4\).}
\label{tab:sigma-rules}
\end{table}
\noindent Similarly, the sign of each upper-triangular entry in $\boldsymbol{\Sigma}'$ is determined by $L \bmod 4$, with one set of rules for the first superdiagonal and a separate set for all higher superdiagonals (see Table~\ref{tab:sigmaprime-rules}).

%
\begin{table}[h]
\centering
\begin{tabular}{>{\boldmath}c<{} c c}
\toprule
\rowcolor{gray!20}
\textbf{\(L \mod 4\)} & \textbf{ \(\boldsymbol{\Sigma}'_{m,m+1}\)} & \textbf{ \(\boldsymbol{\Sigma}'_{m,m+i}\)(\(i > 1\))} \\
\midrule
0 & 
    \(\begin{cases} 
        1, & 1 \leq m < L \\ 
        -1, & L \leq m < 2L 
    \end{cases}\) & 
    \(\boldsymbol{\Sigma}'_{m,m+i} = -\boldsymbol{\Sigma}'_{m+1,m+i}\) \\
1 & 
    \(\begin{cases} 
        1, & 1 \leq m \leq L \\ 
        -1, & L < m < 2L 
    \end{cases}\) & 
    \(\boldsymbol{\Sigma}'_{m,m+i} = -\boldsymbol{\Sigma}'_{m,m+i-1}\) \\
2 & 
    \(\begin{cases} 
        1, & m = 1 \\ 
        -1, & 1 < m < L \\ 
        1, & L \leq m < 2L 
    \end{cases}\) & 
    \(\begin{cases} 
        \boldsymbol{\Sigma}'_{m,m+i} = \boldsymbol{\Sigma}'_{m+1,m+i}, & m=1 \\ 
        \boldsymbol{\Sigma}'_{m,m+i} = -\boldsymbol{\Sigma}'_{m+1,m+i}, & m>1 
    \end{cases}\) \\
3 & 
    \(\begin{cases} 
        -1, & 1 \leq m \leq L \\ 
        1, & L < m < 2L 
    \end{cases}\) & 
    \(\boldsymbol{\Sigma}'_{m,m+i} = -\boldsymbol{\Sigma}'_{m,m+i-1}\) \\
\bottomrule
\end{tabular}
\caption{The upper‐triangular entries of \(\boldsymbol{\Sigma}^\prime\) for different values of \(L \bmod 4\).}
\label{tab:sigmaprime-rules}
\end{table}
\newpage
\subsection{Generating Function Formulation}
\begin{theorem}[Generating Function for Fermionic Gaussian Operators in the $\sigma^z$ Basis]
\label{thm:gen-func-sigmaz}
Let $\mathcal{G}_{\bm{\mathcal{M}}}$ be a fermionic Gaussian operator specified by a $2L\times2L$ antisymmetric matrix $\bm{\mathcal{A}}$. One can define its generating function in the computational basis by
\begin{equation}\label{generating function defenition}
g(\lambda_1,\dots,\lambda_{2L})
=\sum_{\mathcal{J},\mathcal{I}}
\bigl\langle \mathcal{J}\big|\mathcal{G}_{\bm{\mathcal{M}}}\big|\mathcal{I}\bigr\rangle
\prod_{j\in\mathcal{J}_0\cup\mathcal{I}_0}\lambda_j,
\end{equation}
where 
\[
\mathcal{I}_0 \;=\;\bigl\{\,i\in\mathcal{I}:\text{mode }i\text{ is unoccupied}\bigr\},
\]
and $\mathcal{J}_1,\mathcal{I}_1$ are their complements. Then this generating function admits the compact closed‐form
\begin{equation}\label{generating function defenition final}
g(\lambda_1,\dots,\lambda_{2L})
=\operatorname{\mathbf{pf}}\!\bigl[\bm{\mathcal{K}}^z(\bm{\lambda})\bigr],
\end{equation}
where $\bm{\mathcal{K}}^z(\bm{\lambda})$ is the $2L\times2L$ antisymmetric matrix
\begin{equation}
\label{eq:K-lambda}
\mathcal{K}^z_{mn}(\bm{\lambda})
=\Sigma_{mn}\,\mathcal{A}_{mn}
\;+\;
\Sigma'_{mn}\,\lambda_m\,\lambda_n~.
\end{equation}
\end{theorem}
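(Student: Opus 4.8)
The plan is to expand the Pfaffian on the right-hand side of \eqref{generating function defenition final} and match it term-by-term with the defining sum \eqref{generating function defenition}, using the previous theorem (the $\sigma^z$-basis formula \eqref{eq:matrix_element-sigmaz}) as the computational workhorse. First I would recall that the Pfaffian of $\bm{\mathcal{K}}^z(\bm{\lambda})$ is the sum over perfect matchings $M$ of $\{1,\dots,2L\}$ of $\epsilon(M)\prod_{(m,n)\in M}\mathcal{K}^z_{mn}(\bm{\lambda})$, and that each factor $\mathcal{K}^z_{mn}(\bm{\lambda}) = \Sigma_{mn}\mathcal{A}_{mn} + \Sigma'_{mn}\lambda_m\lambda_n$ is a binomial. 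Expanding each binomial over a matching splits the product into a sum over subsets $P\subseteq M$ of pairs that contribute the "$\lambda$-term" $\Sigma'_{mn}\lambda_m\lambda_n$; the pairs in $M\setminus P$ contribute $\Sigma_{mn}\mathcal{A}_{mn}$. Collecting, the coefficient of a given monomial $\prod_{k\in U}\lambda_k$ (where $U$ is the set of indices touched by the $\lambda$-pairs, necessarily of even size and partitioned into pairs by $P$) is itself a Pfaffian-like sum: it is a signed sum over matchings of the complement $\{1,\dots,2L\}\setminus U$ weighted by $\Sigma_{mn}\mathcal{A}_{mn}$, times a signed sum over matchings of $U$ weighted by $\Sigma'_{mn}$.

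The key algebraic step is to identify the set $U$ with $\mathcal{J}_0\cup(L+\mathcal{I}_0)$ for a unique pair of configurations $(\mathcal{S},\mathcal{S}')$: indices of $U$ lying in $\{1,\dots,L\}$ encode the unoccupied sites $\mathcal{J}_0$ of the bra, and those lying in $\{L+1,\dots,2L\}$ encode $\mathcal{I}_0$ of the ket (with the shift by $L$). Under this identification I would show that the coefficient of $\prod_{k\in U}\lambda_k$ in $\operatorname{\mathbf{pf}}[\bm{\mathcal{K}}^z(\bm{\lambda})]$ equals exactly $\operatorname{\mathbf{pf}}[\bm{\mathcal{K}}^z(\mathcal{S},\mathcal{S}')]\cdot(\text{the portion of the Pfaffian expansion restricted to matchings that pair up }U)$. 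The crucial observation is that \eqref{eq:M-sigmaz} already tells us that $\mathcal{K}^z_{mn}(\mathcal{S},\mathcal{S}')$ equals $\Sigma_{mn}\mathcal{A}_{mn}$ when both $m,n$ correspond to occupied modes and equals $\Sigma'_{mn}$ when both correspond to unoccupied modes — precisely the two kinds of factors produced by the binomial expansion — while mixed pairs give zero, which forces the matching to respect the $U$ / $U^c$ partition. Thus the $\lambda$-expansion of the big Pfaffian reassembles, block by block, into the matrix elements computed in \eqref{eq:matrix_element-sigmaz}, and summing over all $U$ (equivalently all $(\mathcal{J},\mathcal{I})$) reproduces \eqref{generating function defenition}.

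The main obstacle I anticipate is the bookkeeping of signs: I must check that the matching-sign $\epsilon(M)$ for $M$ on $\{1,\dots,2L\}$ factors correctly into $\epsilon(M\setminus P)$ for the matching on $U^c$ and $\epsilon(P)$ for the matching on $U$, up to a sign that depends only on $U$ (not on the internal matchings), and that this residual sign is exactly absorbed into the normalization $\operatorname{det}[T_{22}]^{1/2}$ and the sign conventions baked into $\bm{\Sigma},\bm{\Sigma}'$ via \eqref{sigmaprimeGeneral}. Concretely, I would invoke the standard Pfaffian expansion-by-complementary-minors identity: for a fixed subset $U$ of even size, $\operatorname{\mathbf{pf}}[\mathbf{B}] = \sum \pm \operatorname{\mathbf{pf}}[\mathbf{B}_U]\operatorname{\mathbf{pf}}[\mathbf{B}_{U^c}]$ with an explicit Leibniz-type sign, and verify that the sign there is consistent with the one already validated in the proof of the previous theorem. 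Once the sign $\epsilon(M)$ is shown to split as claimed, the rest is a direct comparison of two generating-function expansions, and no further input is needed. A clean way to organize this is to prove the identity first for the purely number-conserving / diagonal cases (where $\bm{\Sigma}'$-contributions and signs are transparent) and then note that the general identity follows because both sides are polynomials in $\bm{\lambda}$ whose coefficients have already been matched in \eqref{eq:matrix_element-sigmaz} for every $(\mathcal{S},\mathcal{S}')$.
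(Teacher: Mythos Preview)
Your approach is correct and takes a genuinely different route from the paper's. The paper works from the left-hand side: it inserts the computational-basis formula \eqref{eq:matrix_element-computational} (Theorem~1, not Theorem~2) into the defining sum \eqref{generating function defenition}, then represents each sub-Pfaffian $\operatorname{pf}[\bm{\mathcal{A}}_{\mathcal{J}_0\mathcal{I}_0}]$ as a Berezin integral $\int\mathcal{D}\bm{\chi}\,\bigl(\prod_{j}\chi_j\bigr)\,e^{\frac12\bm{\chi}^T\bm{\mathcal{A}}\bm{\chi}}$ over auxiliary Grassmann variables, interchanges the sum over $(\mathcal{J},\mathcal{I})$ with the integral, and invokes the sign identity \eqref{sigmaprimeGeneral} to collapse everything into a single Gaussian Berezin integral equal to $\operatorname{\mathbf{pf}}[\bm{\mathcal{K}}^z(\bm{\lambda})]$. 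You instead expand the right-hand Pfaffian combinatorially over matchings, distribute the binomials, and match $\lambda$-monomial coefficients using the block structure of $\bm{\mathcal{K}}^z(\mathcal{S},\mathcal{S}')$ already established in \eqref{eq:M-sigmaz}. Your route is more elementary---no Grassmann calculus is needed---and makes transparent that Theorem~3 is Theorem~2 repackaged through the Pfaffian complementary-minor expansion; the paper's Grassmann machinery, by contrast, is what carries over uniformly to the rotated-basis generalization in Theorem~\ref{theorem 4}. The sign obstacle you flag is real but is precisely the content of \eqref{sigmaprimeGeneral}, which both proofs ultimately lean on, so no new sign analysis beyond what was done for Theorem~2 is required.
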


\begin{proof}
  Starting from the definition of $g(\{\lambda\})$ in Eq.~\eqref{generating function defenition}, one can insert the explicit matrix elements in the computational basis (cf.\ Eq.~\eqref{eq:matrix_element-computational})
 \begin{equation}\label{eq:gpfaffian}
    g(\{\lambda\})
    =\sum_{\mathcal{J},\mathcal{I}}
    (-1)^{\frac{|\mathcal{I}_1|(|\mathcal{I}_1|+2|\mathcal{J}_1|+1)}{2}}
    \,\operatorname{pf}\!\bigl[\bm{\mathcal{A}}_{\mathcal{J}_0\mathcal{I}_0}\bigr]
    \prod_{j\in\mathcal{J}_0\cup\mathcal{I}_0}\lambda_j.
  \end{equation}
  Next, we introduce a set of Grassmann variables $\{\chi_j\}_{j=1}^{2L}$ and use
  the Berezin‐integral representation of a Pfaffian~\cite{Carac2013}
  \begin{equation}
    \operatorname{pf}\!\bigl[\bm{\mathcal{A}}_{\mathcal{J}_0\mathcal{I}_0}\bigr]
    =\epsilon(\mathcal{J}_0,\mathcal{I}_0)\int\! \mathcal{D}\bm{\chi} \prod_{j\in\mathcal{J}_0\cup\mathcal{I}_0}\chi_j\;e^{\tfrac12\,\bm{\chi}^T\,\bm{\mathcal{A}}\,\bm{\chi}}\,.
  \end{equation}
  where $D\bm{\chi}=D\chi_{2L}D\chi_{2L-1}...D\chi_{1}$ and $ \epsilon(\mathcal{J}_0,\mathcal{I}_0)=(-1)^{|{\mathcal{J}_0\cup\mathcal{I}_0}|(|{\mathcal{J}_0\cup\mathcal{I}_0}|-1)/2}(-1)^{\sum_{j\in\mathcal{J}_0\cup\mathcal{I}_0} j}$. Inserting this in Eq.~\eqref{eq:gpfaffian} leads to
  \begin{equation}\label{generating function defenition2}
    g(\{\lambda\})
    =\sum_{\mathcal{J},\mathcal{I}}
    (-1)^{\frac{|\mathcal{I}_1|(|\mathcal{I}_1|+2|\mathcal{J}_1|+1)}{2}}
    \,\epsilon(\mathcal{J}_0,\mathcal{I}_0)\,
    \int\!\mathcal{D}\bm{\chi}\;e^{\tfrac12\,\bm{\chi}^T\bm{\mathcal{A}}\,\bm{\chi}}
    \prod_{j\in\mathcal{J}_0\cup\mathcal{I}_0}\lambda_j\,\chi_j.
 \end{equation}
  After substituting equation (\ref{sigmaprimeGeneral}) in Eq.~\eqref{generating function defenition2} and making the full exponential Berezin integral over Grassmann variables one can finally obtain Eq.~\eqref{generating function defenition final}.
\end{proof}
\section{Pfaffian Form of Gaussian Operators in Arbitrary Local Pauli Bases}\label{sec:arbitrary-bases}
We now generalize Theorem 3, for which we obtained a compact pfaffian formula in the $\sigma^z$ computational basis, to an arbitrary product Pauli basis.  Let each local basis on site \(j\) be specified by angles \((\phi_j,\theta_j,\alpha_j)\). We label the basis states on \(L\) qubits as
\begin{equation}
\bra{\mathcal{S}} = \bra{s_L, \ldots, s_2, s_1} \quad \text{and} \quad \ket{\mathcal{S}'} = \ket{s'_{1}, s'_{2}, \ldots, s'_{L}}\equiv \ket{s_{L+1}, s_{L+2}, \ldots, s_{2L}},\nonumber
\end{equation}
where \(s_j\) and \(s'_j\) are expressed in the \((\phi_j,\theta_j,\alpha_j)\) basis. As before, \(\mathcal{S}^+\) and \(\mathcal{S}'^+\) denote the sets of qubits with spin up, and \(\mathcal{S}^-\) and \(\mathcal{S}'^-\) denote the sets of qubits with spin down, respectively, and we associate the value \(\bar{s} = +1\) and \(\bar{s}' = +1\) to spin up, and \(\bar{s} = -1\) and \(\bar{s}' = -1\) to spin down. We would like to write an explicit formula for $\bra{\mathcal{S}} \mathcal{G}_{\bm{\mathcal{M}}} \ket{\mathcal{S'}}_{(\boldsymbol{\phi},\boldsymbol{\theta},\boldsymbol{\alpha})}  $. The following theorem provides an efficient and explicit formula to calculate these elements.

\begin{theorem} [ Gaussian Operators in an arbitrary Pauli Basis]\label{theorem 4}
A generic element of a Gaussian operator in an arbitrary Pauli bases can be written as:
\begin{equation}\label{main formula}
 \bra{\mathcal{S}} \mathcal{G}_{\bm{\mathcal{M}}} \ket{\mathcal{S'}}_{(\boldsymbol{\phi},\boldsymbol{\theta},\boldsymbol{\alpha})} = e^{-i\left(\sum_{j\in \mathcal{S}^-}\alpha_j - \sum_{k\in \mathcal{S'}^-}\alpha_k\right)}  \bold{pf}\Bigl[\bm{\mathcal{K}}^{(\boldsymbol{\phi},\boldsymbol{\theta},\boldsymbol{\alpha})}(\mathcal{S},\mathcal{S}')\Bigr].
\end{equation}
For indices \(n > m\), the elements of the antisymmetric matrix \(\bm{\mathcal{K}}^{(\boldsymbol{\phi},\boldsymbol{\theta},\boldsymbol{\alpha})}(\mathcal{S},\mathcal{S}')\) can be found as:
\begin{align}
\mathcal{K}_{mn}^{(\boldsymbol{\phi},\boldsymbol{\theta},\boldsymbol{\alpha})} =\; &
\boldsymbol{\Sigma}_{m,n}\,\mathcal{A}_{mn}\, e^{i(\bar{\phi}_m+\bar{\phi}_n)}
\left(\cos\frac{\theta_m}{2}\right)^{\frac{1+\bar{s}_m}{2}}
\left(\cos\frac{\theta_n}{2}\right)^{\frac{1+\bar{s}_n}{2}}
\left(\sin\frac{\theta_m}{2}\right)^{\frac{1-\bar{s}_m}{2}}
\left(\sin\frac{\theta_n}{2}\right)^{\frac{1-\bar{s}_n}{2}} \nonumber \\[1mm]
&+\, (-1)^{\frac{|\bar{s}_m-\bar{s}_n|}{2}}\, \boldsymbol{\Sigma}'_{m,n}
\left(\sin\frac{\theta_m}{2}\right)^{\frac{1+\bar{s}_m}{2}}
\left(\sin\frac{\theta_n}{2}\right)^{\frac{1+\bar{s}_n}{2}}
\left(\cos\frac{\theta_m}{2}\right)^{\frac{1-\bar{s}_m}{2}}
\left(\cos\frac{\theta_n}{2}\right)^{\frac{1-\bar{s}_n}{2}},
\end{align}
where the phases \(\bar{\phi}_j\) are given by
\begin{equation}
\bar{\phi}_j = 
\begin{cases}
    \phi_j, & j\in \{1,\ldots,L\},\\[1mm]
    -\phi_j, & j\in \{L+1,\ldots,2L\},
\end{cases}
\end{equation}
and the indices for the parameters \(\boldsymbol{\phi}\), \(\boldsymbol{\theta}\), and \(\boldsymbol{\alpha}\) follow a cyclic pattern such that \(L+1 \equiv 1\), \(L+2 \equiv 2\), \(\ldots\), \(2L \equiv L\).

\end{theorem}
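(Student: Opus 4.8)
\emph{Proof strategy.} The plan is to reduce the arbitrary--basis matrix element to the $\sigma^z$--basis one of Theorem~\ref{thm:gen-func-sigmaz} by resolving every local rotated basis state in the computational basis, and then to recombine the resulting weighted sum over computational configurations into a single Pfaffian, using the generating--function identity \eqref{eq:K-lambda} together with the congruence rule $\operatorname{pf}[D\,\bm{M}\,D]=\det(D)\,\operatorname{pf}[\bm{M}]$ for diagonal $D$. Concretely, by \eqref{eq:rot} the quantity $\bra{\mathcal S}\mathcal G_{\bm{\mathcal M}}\ket{\mathcal S'}_{(\bm\phi,\bm\theta,\bm\alpha)}$ is the computational--basis matrix element of $U\mathcal G_{\bm{\mathcal M}}U^\dagger$, hence the matrix element of $\mathcal G_{\bm{\mathcal M}}$ between $U^\dagger\ket{\mathcal S}$ and $U^\dagger\ket{\mathcal S'}$; on each site $j$ the two rotated basis kets are the explicit combinations of $\ket{\uparrow}_z$ and $\ket{\downarrow}_z$ read off from the columns of $U_{(\phi_j,\theta_j,\alpha_j)}^\dagger$ in \eqref{eq:U}. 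Expanding the two tensor products gives $\bra{\mathcal S}\mathcal G_{\bm{\mathcal M}}\ket{\mathcal S'}_{(\bm\phi,\bm\theta,\bm\alpha)}=\sum_{\mathcal T,\mathcal T'}\bigl(\prod_{j=1}^{2L}w_j\bigr)\,\bra{\mathcal T}\mathcal G_{\bm{\mathcal M}}\ket{\mathcal T'}$, a sum over all pairs $\mathcal T,\mathcal T'$ of computational configurations, each $w_j$ being the entry of $U_{(\phi_j,\theta_j,\alpha_j)}^\dagger$ picked out by the rotated spin of site $j$ and by whether site $j$ is computationally occupied.

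\emph{Extracting the phase and collapsing the sum.} The key observation is that the rotated down-ket of site $j$ is a pure phase---$e^{+i\alpha_j}$ on the ket copy, $e^{-i\alpha_j}$ on the bra copy---times an $\alpha$--independent vector, whereas the rotated up-ket carries no $\alpha$. Pulling these phases out of every down site collects exactly the prefactor $e^{-i(\sum_{j\in\mathcal{S}^-}\alpha_j-\sum_{k\in\mathcal{S'}^-}\alpha_k)}$. For the remaining, $\alpha$--free weights one writes the amplitude for site $j$ to be computationally empty as $u_j\mu_j$, where $u_j=(\cos\tfrac{\theta_j}{2})^{(1+\bar s_j)/2}(\sin\tfrac{\theta_j}{2})^{(1-\bar s_j)/2}$ is the amplitude for it to be occupied and $\mu_j$ is the ratio of the two. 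Pulling $\prod_j u_j$ out of the sum, the residual sum over $\mathcal T,\mathcal T'$ weighted by $\prod_{j\in\mathcal T_0\cup\mathcal T'_0}\mu_j$ is precisely the generating function of Theorem~\ref{thm:gen-func-sigmaz} at $\bm\lambda=\bm\mu$, so it equals $\operatorname{pf}[\bm{\mathcal K}^z(\bm\mu)]$ with $\mathcal K^z_{mn}(\bm\mu)=\Sigma_{mn}\mathcal A_{mn}+\Sigma'_{mn}\mu_m\mu_n$ as in \eqref{eq:K-lambda}, up to the standing $\det[T_{22}]^{1/2}$ of Theorem~\ref{thm:gen-func-sigmaz}. (Where some $\cos\tfrac{\theta_j}{2}$ or $\sin\tfrac{\theta_j}{2}$ vanishes the ratio is ill--defined, but the identity to be proven is polynomial in the $\cos\tfrac{\theta_j}{2}$ and $\sin\tfrac{\theta_j}{2}$, so establishing it generically suffices.)

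\emph{Recombining into one Pfaffian.} Applying the congruence rule with $D=\operatorname{diag}(u_1,\dots,u_{2L})$ absorbs $\prod_j u_j$ and turns the Pfaffian into that of the antisymmetric matrix with entries $\widetilde{\mathcal K}_{mn}=u_mu_n\Sigma_{mn}\mathcal A_{mn}+v_mv_n\Sigma'_{mn}$, where $v_j=u_j\mu_j=(-1)^{(1-\bar s_j)/2}\,e^{-i\bar\phi_j}(\sin\tfrac{\theta_j}{2})^{(1+\bar s_j)/2}(\cos\tfrac{\theta_j}{2})^{(1-\bar s_j)/2}$ is the $\alpha$--free empty-site amplitude read from \eqref{eq:U}, with the cyclic index convention and $\bar\phi_j=\phi_j$ for $j\le L$, $\bar\phi_j=-\phi_j$ for $j>L$. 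Pulling the common $e^{-i(\bar\phi_m+\bar\phi_n)}$ out of every entry of $\widetilde{\mathcal K}$ leaves exactly the matrix $\mathcal K^{(\bm\phi,\bm\theta,\bm\alpha)}$ of the theorem---one checks that $(-1)^{(1-\bar s_m)/2+(1-\bar s_n)/2}=(-1)^{|\bar s_m-\bar s_n|/2}$ reproduces the stated sign and that the surviving $e^{i(\bar\phi_m+\bar\phi_n)}$ lands on the $\Sigma\mathcal A$ term---while a final use of the congruence rule on $\operatorname{diag}(e^{-i\bar\phi_j})$ is free, since $\prod_{j=1}^{2L}e^{-i\bar\phi_j}=\prod_{j=1}^{L}e^{-i\phi_j}\,\prod_{j=1}^{L}e^{+i\phi_j}=1$. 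Collecting the three steps yields \eqref{main formula}.

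\emph{Expected obstacle.} The collapse in the second step is the structural core and is essentially immediate once Theorem~\ref{thm:gen-func-sigmaz} is in hand; the real work is the phase-- and sign--bookkeeping---isolating the per--site $\alpha$--phase of the rotated down-state, and correctly propagating the $\bar\phi$--phase, the $-1$ from the $(2,2)$ entry of $U_{(\phi,\theta,\alpha)}$, and the cyclic relabelling through the four cases (site on the bra or ket copy, computationally occupied or empty)---together with checking that no new constraint on $\bm\Sigma$ and $\bm\Sigma'$ is forced, which is automatic because Theorem~\ref{thm:gen-func-sigmaz} already closes with exactly these matrices.
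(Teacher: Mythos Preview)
Your proposal is correct and follows essentially the same route as the paper's proof: expand the rotated bra and ket in the computational basis, recognize the resulting weighted sum over $\sigma^z$ configurations as the generating function of Theorem~\ref{thm:gen-func-sigmaz} evaluated at a particular $\bm\lambda$, and then absorb the per--site weights back into the Pfaffian via diagonal congruence. The paper organizes the bookkeeping slightly differently---it sets $\bm\phi,\bm\alpha$ to zero first, reads off the $\bm\theta$--dependent coefficient of each $\langle\mathcal J|\mathcal G_{\bm{\mathcal M}}|\mathcal I\rangle$ (your $\prod u_j\prod\mu_j$ split), invokes the generating function with $\lambda_j=\tan\tfrac{\theta_j}{2}$ or $-\cot\tfrac{\theta_j}{2}$, and only then restores the phases---whereas you handle all three angles simultaneously and make the congruence rule $\operatorname{pf}[DMD]=\det D\cdot\operatorname{pf}[M]$ explicit; but the substance is the same, including your correct observation that $\prod_{j=1}^{2L}e^{\pm i\bar\phi_j}=1$ makes the final $\phi$--congruence free. (Your parenthetical about the standing $\det[T_{22}]^{1/2}$ is also apt: the paper's statement of Theorem~4 silently drops it relative to Theorem~2.)
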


\begin{proof}[Proof]
We begin with the fact that, in the rotated Pauli frame, the Gaussian operator can be written as
\begin{equation}\label{gaussian operator alternative basis}
\mathcal G_{\bm{\mathcal M}}^{(\boldsymbol\phi,\boldsymbol\theta,\boldsymbol\alpha)}=U(\phi,\theta,\alpha)
\mathcal G_{\bm{\mathcal M}}^z\;
U^\dagger(\phi,\theta,\alpha)\,,
\end{equation}
where $U(\phi,\theta,\alpha)$ is defined in Eq.\eqref{eq:U}. Hence, the generating function is exactly the same as in Eq.~\eqref{generating function defenition}, except that each matrix element now carries the overall phase and trigonometric weights induced by the conjugation with \(U\).  In general, one can follow the same steps as before to arrive at the compact pfaffian formula in Eq.~\eqref{main formula}. In practice, one proceeds by treating the angle‐dependence in three successive steps.

\paragraph{Step 1. $\boldsymbol{\phi}$ and $\boldsymbol{\alpha}$ angles:}
The contribution of these angles are easy to be seen by directly writing the equation (\ref{gaussian operator alternative basis}). From now on to avoid unnecessary  complication in notation we can put them zero and retrieve them later.

\paragraph{Step 2. $\boldsymbol{\theta}$  angle:}The main idea of finding the contribution of $\theta$ angle  is by looking at the coefficient of $\langle \mathcal{J}|\mathcal{G}_{\bm{\mathcal{M}}}|\mathcal{I} \rangle$, (i.e. which is the element in the $\sigma^z$ basis),  in the expansion of each element of the Gaussian operator matrix in alternative basis, i.e. $\bra{\mathcal{S}} \mathcal{G}_{\bm{\mathcal{M}}} \ket{\mathcal{S'}}_{(0,\boldsymbol{\theta},0)}$. A little inspection shows that the coefficient is the following:
 \begin{equation}\label{trigonometric}
  \prod_{j\in \mathcal{C}}\cos\frac{\theta_j}{2}\prod_{j\in \overline{\mathcal{C}}}\sin\frac{\theta_j}{2},
\end{equation}
where $\mathcal{C}$ is defined as follows: We first define the set of all qubits that the spins in both $\sigma^z$ and $(0,\boldsymbol{\theta},0)$ bases are in the up(down) direction as follows:
 \begin{eqnarray}\label{updown union}
 \mathcal{C}^+&=&(\mathcal{S}^+\cup \mathcal{S}'^+)\cap(\mathcal{J}_1\cup\mathcal{I}_1),\\
 \mathcal{C}^-&=&(\mathcal{S}^-\cup \mathcal{S}'^-)\cap(\mathcal{J}_0\cup\mathcal{I}_0).
\end{eqnarray}
Then we have,
\begin{eqnarray}\label{C C bar}
 \mathcal{C}&=&\mathcal{C}^+\cup\mathcal{C}^-,\\
 \bar{\mathcal{C}}&=&\mathcal{C}^+\cap\mathcal{C}^-.
\end{eqnarray}
In other words $\mathcal{C}$ is the set of all sites that spins in the $\sigma^z$ basis (considering $\langle \mathcal{J}|\mathcal{G}_{\bm{\mathcal{M}}}|\mathcal{I} \rangle$ ) and $(0,\boldsymbol{\theta},0)$ basis are the same.
After factoring out $\prod_{j\in \{\mathcal{S}^+ \ \text{or}  \ \mathcal{{S}}^{'+}\}}\cos\frac{\theta_j}{2} \prod_{j\in \{\mathcal{S}^- \ \text{or}  \ \mathcal{{S}}^{'-}\}} \sin\frac{\theta_j}{2}$ the rest of the expansion (inspired by the generating function defined in the previous section) can be written as:

\begin{equation}\label{elements of density matrix with respect to R matrix}
  \bold{pf}\Bigl[\bm{\mathcal{K}}^{(0,\boldsymbol{\theta},0)}(\mathcal{S},\mathcal{S}')\Bigr],
\end{equation}
where 
\begin{equation}
\begin{split}
 \mathcal{K}^{(0,\boldsymbol{\theta},0)}(\mathcal{S},\mathcal{S}') = \Sigma_{m,n}\,\mathcal{A}_{mn}\, + (-1)^{\frac{|\tilde{s}_m-\tilde{s}_n|}{2}}\, \Sigma_{m,n}' \tan^{\tilde{s}_m}{\frac{\theta_m}{2}}\tan^{\tilde{s}_n}\frac{\theta_n}{2}.
 \end{split}
\end{equation}
Here we take  $\lambda_j=\tan\frac{\theta_j}{2}$($\lambda_j=-\cot\frac{\theta_j}{2}$) when the spins are up (down).

\paragraph{Step 3. Restoring all the angles:}
After absorbing $\prod_{j\in \{\mathcal{S}^+ \ \text{or}  \ \mathcal{{S}}^{'+}\}}\cos\frac{\theta_j}{2} \prod_{j\in \{\mathcal{S}^- \ \text{or}  \ \mathcal{{S}}^{'-}\}} \sin\frac{\theta_j}{2}$ inside the pfaffian and 
restoring all the angles $\boldsymbol{\alpha}$  and $\boldsymbol{\phi}$ we reach to the main formula (\ref{main formula}).

\end{proof}

\subsection{Especial Cases}\label{sec:especial-bases}
In Table \ref{M_nm in different basis} we list several specializations of the general pfaffian kernel obtained by expressing the bra \(\bra{\mathcal S}\) in the \(\mu\)\nobreakdash–basis and the ket \(\ket{\mathcal S'}\) in the \(\nu\)\nobreakdash–basis, with \(\mu,\nu\in\{x,y,z\}\), defined by \(\mathcal K_{mn}^{(\mu\nu})\). Each entry follows by substituting into Theorem 4 the angles
\begin{equation}
(\phi_j,\theta_j,\alpha_j)=
\begin{cases}
(0,0,0), & j\text{ in the }z\text{–basis},\\
(0,\tfrac\pi2,0), & j\text{ in the }x\text{–basis},\\
(\tfrac\pi2,\tfrac\pi2,0), & j\text{ in the }y\text{–basis}.
\end{cases}
\end{equation}
 \begin{table}[h]
\centering
\renewcommand{\arraystretch}{1.5}
\rowcolors{2}{gray!10}{white} 
\begin{tabular}{cl}
\toprule
\rowcolor{gray!30} \textbf{Matrix} & \textbf{Expression} \\
\midrule
$\mathcal{K}_{mn}^{(z,z)}$ & $(\frac{1+\bar{s}_m}{2})(\frac{1+\bar{s}_n}{2})\Sigma_{m,n}\mathcal{A}_{mn}+(\frac{1-\bar{s}_m}{2})(\frac{1-\bar{s}_n}{2})\Sigma_{m,n}'$ \\
$\mathcal{K}_{mn}^{(x,x)}$ & $\frac{1}{2}\left(\Sigma_{m,n}\mathcal{A}_{mn}+\bar{s}_m\bar{s}_n \Sigma_{m,n}'\right)$ \\
$\mathcal{K}_{mn}^{(y,y)}$ & $\frac{1}{2}\left(-\Sigma_{m,n}\mathcal{A}_{mn}+\bar{s}_m\bar{s}_n \Sigma_{m,n}'\right)$ \\
$\mathcal{K}_{mn}^{(z,x)}$ & $\frac{1}{\sqrt{2}}\left[(\frac{1+\bar{s}_m}{2})\Sigma_{m,n}\mathcal{A}_{mn}-\bar{s}_n(\frac{1-\bar{s}_m}{2})\Sigma_{m,n}'\right]$ \\
$\mathcal{K}_{mn}^{(x,z)}$ & $\frac{1}{\sqrt{2}}\left[(\frac{1+\bar{s}_n}{2})\Sigma_{m,n}\mathcal{A}_{mn}-\bar{s}_m(\frac{1-\bar{s}_n}{2})\Sigma_{m,n}'\right]$\\
$\mathcal{K}_{mn}^{(z,y)}$ & $\frac{1}{\sqrt{2}}\left[(\frac{1+\bar{s}_m}{2})i \Sigma_{m,n}\mathcal{A}_{mn}-\bar{s}_n(\frac{1-\bar{s}_m}{2})\Sigma_{m,n}'\right]$ \\
$\mathcal{K}_{mn}^{(y,z)}$ & $\frac{1}{\sqrt{2}}\left[(\frac{1+\bar{s}_n}{2})i \Sigma_{m,n}\mathcal{A}_{mn}-\bar{s}_m(\frac{1-\bar{s}_n}{2})\Sigma_{m,n}'\right]$ \\
$\mathcal{K}_{mn}^{(x,y)},$ $\mathcal{K}_{mn}^{(y,x)}$ & $\frac{1}{2}\left(i \Sigma_{m,n}\mathcal{A}_{mn}+\bar{s}_m\bar{s}_n \Sigma_{m,n}' \right)$ \\
\bottomrule
\end{tabular}
\caption{$\mathcal{K}_{mn}(\boldsymbol{\phi},\boldsymbol{\theta},\boldsymbol{\alpha})$ special cases.}
\label{M_nm in different basis}
\end{table} 
\section{Clifford and Lie Algebraic Structure of $\bold{\Sigma}$ and 
$\bold{\Sigma}'$ Matrices }\label{sec:Sigma-matrices}

In section (\ref{sec:sigmaz-basis}) we proved that there are $2^{2L-1}$ pairs of  the matrices $\boldsymbol{\Sigma}$ and $\boldsymbol{\Sigma}'$. In fact, starting from one pair one can systematically  produce the rest by just using the properties of the pfaffian of the matrices.  
For example, if we define
\begin{equation}
\tilde{\bm{\mathcal{K}}}^{(\boldsymbol{\phi},\boldsymbol{\theta},\boldsymbol{\alpha})}(\mathcal{S},\mathcal{S}')
= P_s \, \bm{\mathcal{K}}^{(\boldsymbol{\phi},\boldsymbol{\theta},\boldsymbol{\alpha})}(\mathcal{S},\mathcal{S}') \, P_s,Elements
\end{equation}
with
\begin{equation}
P_s = \operatorname{diag}\bigl(p_1, p_2, \ldots, p_{2L}\bigr) \quad \text{and} \quad p_j \in \{+1, -1\},
\end{equation}
then it follows that
\begin{equation}
\text{pf}\Bigl[\tilde{\bm{\mathcal{K}}}^{(\boldsymbol{\phi},\boldsymbol{\theta},\boldsymbol{\alpha})}(\mathcal{S},\mathcal{S}')\Bigr] = \pm\text{pf}\Bigl[\bm{\mathcal{K}}^{(\boldsymbol{\phi},\boldsymbol{\theta},\boldsymbol{\alpha})}(\mathcal{S},\mathcal{S}')\Bigr ].
\end{equation}
This invariance implies that while the pfaffian of the matrix remains unchanged modulo a sign, the specific forms of the matrices $\boldsymbol{\Sigma}$ and $\boldsymbol{\Sigma}'$ 
can be modified by such a transformation. By choosing different configurations for the diagonal entries of \(P_s\), one effectively obtains alternative representations for $\boldsymbol{\Sigma}$ and $\boldsymbol{\Sigma}'$.
 
For a system of size \(L\), there are \(2^{2L}\) possible choices for the sign vector \((p_1, p_2, \ldots, p_{2L})\). Since flipping all signs simultaneously leaves the Pfaffian unchanged , only $2^{2L-1}$ such transformations yield genuinely distinct configurations. 
  One can start by considering the configuration where all \(p_j = -1\). It may occur that some of these configurations lead to a negative pfaffian value. In order to rectify this sign ambiguity, one can apply the following adjustments:
\begin{itemize}
    \item If \(L\) is odd, multiply the entire configuration by \(-1\), thereby flipping the sign of the Pfaffian.
    \item If \(L\) is even, multiply all elements of the transformed matrix  by \(-1\), except for those in the first row and first column.
\end{itemize}
In this way one can derive all the possible pairs of the $\boldsymbol{\Sigma}$ and $\boldsymbol{\Sigma}'$ matrices.
\subsection{Lie Algebra Structure}

Beyond the properties mentioned above, we found that these matrices belong to a closed Lie algebra. This algebra is isomorphic to the \(\mathfrak{so}(2L)\) algebra, which is defined as follows:
\begin{equation}
    [X_{ij}, X_{kl}] = i (\delta_{jk} X_{il} - \delta_{ik} X_{jl} + \delta_{il} X_{jk} - \delta_{jl} X_{ik}),
\end{equation}
where $2L\geq i,j,k,l\geq 1$ and $(X_{nm})_{ts}=\delta_{nt}\delta_{ms}-\delta_{ns}\delta_{mt}$.
We have the following theorem: 

\begin{theorem} [ $\boldsymbol{\Sigma}$ and $\boldsymbol{\Sigma}'$ matrices and the \(\mathfrak{so}(2L)\) ]\label{conjecture1}
The matrices $\boldsymbol{\Sigma}$ and $\boldsymbol{\Sigma}'$ are part of an algebra of $L(2L-1)$ generators that are isomorphic to \(\mathfrak{so}(2L)\) algebra.  

\end{theorem}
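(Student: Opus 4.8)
The plan is to reduce the statement to the commutator structure of a family of sign‑twisted rotation generators and to identify that structure with $\mathfrak{so}(2L)$ through an explicit orthogonal conjugation. Let $E_{mn}$ denote the $2L\times2L$ matrix units and set $\Lambda_{mn}:=E_{mn}-E_{nm}$ for $m<n$, the standard basis of $\mathfrak{so}(2L)$, with $[\Lambda_{mn},\Lambda_{kl}]=\delta_{nk}\Lambda_{ml}-\delta_{nl}\Lambda_{mk}-\delta_{mk}\Lambda_{nl}+\delta_{ml}\Lambda_{nk}$. Since $\boldsymbol{\Sigma}$ and $\boldsymbol{\Sigma}'$ are real and antisymmetric, and the space of all real antisymmetric $2L\times2L$ matrices has dimension $L(2L-1)$ and is closed under the bracket (if $A^{T}=-A$ and $B^{T}=-B$ then $[A,B]^{T}=-[A,B]$), the soft half of the claim---that these matrices already sit inside a Lie algebra of the correct dimension---is immediate. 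The substantive part is to exhibit the distinguished generators singled out by the sign rules of Tables~\ref{tab:sigma-rules} and~\ref{tab:sigmaprime-rules}, to show that $\boldsymbol{\Sigma}$ and $\boldsymbol{\Sigma}'$ lie in their span, and to establish the isomorphism with $\mathfrak{so}(2L)$ in a way that exposes the Clifford/Majorana origin of the construction.

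Concretely I would take $G_{mn}:=\boldsymbol{\Sigma}_{m,n}\,\Lambda_{mn}$ for $m<n$: these are $\binom{2L}{2}=L(2L-1)$ linearly independent antisymmetric matrices with $\boldsymbol{\Sigma}=\sum_{m<n}G_{mn}$ and $\boldsymbol{\Sigma}'=\sum_{m<n}\bigl(\boldsymbol{\Sigma}'_{m,n}\boldsymbol{\Sigma}_{m,n}\bigr)\,G_{mn}$ both in their span. From $\Lambda_{mn}\Lambda_{kl}=\delta_{nk}E_{ml}-\delta_{nl}E_{mk}-\delta_{mk}E_{nl}+\delta_{ml}E_{nk}$ one gets $[G_{mn},G_{kl}]=\boldsymbol{\Sigma}_{m,n}\boldsymbol{\Sigma}_{k,l}\,[\Lambda_{mn},\Lambda_{kl}]$, so closure of $\{G_{mn}\}$ under the bracket is equivalent to the multiplicative (coboundary) identity $\boldsymbol{\Sigma}_{m,n}\boldsymbol{\Sigma}_{n,l}=\pm\boldsymbol{\Sigma}_{m,l}$ with the sign matching the relevant $\mathfrak{so}(2L)$ structure constant. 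The parametrization $\boldsymbol{\Sigma}_{ij}=p_ip_j$ $(i<j)$ from \eqref{eq:sigmavrtop} gives exactly $\boldsymbol{\Sigma}_{m,n}\boldsymbol{\Sigma}_{n,l}=\boldsymbol{\Sigma}_{m,l}$, i.e. $G_{mn}=P\,\Lambda_{mn}\,P$ with $P=\operatorname{diag}(p_1,\dots,p_{2L})\in O(2L)$; hence $\operatorname{span}\{G_{mn}\}=P\,\mathfrak{so}(2L)\,P=\mathfrak{so}(2L)$ and conjugation by $P$ is the required isomorphism. The same $P_s\in O(2L)$ conjugations relate the $2^{2L-1}$ admissible pairs, so all of them belong to this single copy of $\mathfrak{so}(2L)$; and identifying $G_{mn}$ with the image of the Majorana bilinear $\tfrac12\gamma_m\gamma_n$ in the defining representation of $\mathfrak{so}(2L)$, with the $p_i$ a choice of Majorana orientation, makes the Clifford embedding explicit.

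The main obstacle will be the $\boldsymbol{\Sigma}'$ block together with the $L\bmod 4$ case distinctions. By \eqref{sigmaprime1}--\eqref{sigmaprime3} the entries of $\boldsymbol{\Sigma}'$ differ from those of $\boldsymbol{\Sigma}$ by factors of the form $\pm(-1)^{m+n}$, which are coboundary‑like ($(-1)^{m+n}=q_mq_n$, $q_j=(-1)^j$) but carry block‑dependent global signs $(-1)^{L+1}$, and the first superdiagonal of $\boldsymbol{\Sigma}'$ obeys a rule of its own. One has to check, uniformly in $L$ and over all four residues, that the combined sign table of $(\boldsymbol{\Sigma},\boldsymbol{\Sigma}')$ is globally a pure coboundary up to an overall sign---equivalently that the associated $2$‑cocycle is trivial---so that $\{G_{mn}\}$ closes to a Lie algebra of dimension exactly $L(2L-1)$, which, being conjugate to $\mathfrak{so}(2L)$, is semisimple and hence admits no surviving central extension (ruling out a cover or quotient). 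I would organize this verification along the same four‑case scheme used in the proof of the $\sigma^z$‑basis theorem and cross‑check it against the explicit $L=2,3$ tables of Appendix~\ref{sec:AppendixA}; once it goes through, the isomorphism is realized uniformly as conjugation by the signed diagonal (equivalently signed permutation) matrix read off from the two sign tables, which completes the argument.
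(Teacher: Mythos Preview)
Your proposal misses the content of the theorem as the paper understands and proves it. The statement, read together with the $L=2$ example in the main text and the proof in Appendix~\ref{sec:AppendixB}, is a \emph{generation} claim: the Lie subalgebra obtained by iterated commutators starting from the two matrices alone is the full $\mathfrak{so}(2L)$, i.e.\ $\mathrm{Lie}\{\boldsymbol{\Sigma},\boldsymbol{\Sigma}'\}=\mathfrak{so}(2L)$. What you establish is only that $\boldsymbol{\Sigma},\boldsymbol{\Sigma}'$ lie in $\mathfrak{so}(2L)$---which, as you yourself note, is immediate from antisymmetry---together with the observation that your $G_{mn}=P\Lambda_{mn}P$ span $\mathfrak{so}(2L)$. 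But the $G_{mn}$ are \emph{defined} externally, not produced from $\boldsymbol{\Sigma}$ and $\boldsymbol{\Sigma}'$ by commutation; showing that $\boldsymbol{\Sigma}=\sum G_{mn}$ and $\boldsymbol{\Sigma}'=\sum(\boldsymbol{\Sigma}'_{m,n}\boldsymbol{\Sigma}_{m,n})G_{mn}$ lie in their span says nothing about whether iterated brackets of these two particular linear combinations recover all $L(2L-1)$ independent directions. Your cocycle/coboundary discussion concerns closure of $\{G_{mn}\}$ under bracket, which is automatic once $G_{mn}=P\Lambda_{mn}P$; it does not touch the generation question.

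The paper's argument is of a completely different nature. It invokes a Kuranishi-type two-element generation criterion for simple Lie algebras: if $\boldsymbol{H}$ is regular semisimple (its centralizer is exactly a Cartan subalgebra) and $\boldsymbol{E}$ has nonzero projection onto every simple-root space, then $\mathrm{Lie}\{\boldsymbol{H},\boldsymbol{E}\}=\mathfrak{g}$. The work is then to diagonalize $\boldsymbol{\Sigma}$ explicitly via an orthogonal $\boldsymbol{g}$, verify that its eigenvalues $\omega_k=-\cot\bigl((2k-1)\pi/4L\bigr)$ are all distinct (regularity), and compute the Hilbert--Schmidt overlaps $\langle E_{\alpha_l},\tilde{\boldsymbol{\Sigma}}'\rangle$ with the simple-root generators in closed form, checking that none vanish. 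To repair your approach you would need an independent argument that two specific elements generate all of $\mathfrak{so}(2L)$; conjugation by $P$ and the existence of a spanning set $\{G_{mn}\}$ do not supply this.
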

The proof is presented in the Appendix (\ref{sec:AppendixB}). For example, for  $L=2$, the $\boldsymbol{\Sigma}$ and $\boldsymbol{\Sigma}'$ matrices can be written with respect to the generators of the \(\mathfrak{so}(4)\) algebra as follows:
\begin{equation}
\begin{split}
\boldsymbol{\Sigma}=X_{12}+X_{13}+X_{14}-X_{23}-X_{24}-X_{34},\\
\boldsymbol{\Sigma}'=X_{12}+X_{13}-X_{14}+X_{23}-X_{24}+X_{34}.\\
\end{split}
\end{equation}
The six operators $\boldsymbol{\S}, \boldsymbol{\S}', \boldsymbol{\S}_3, \boldsymbol{\S}_4, \boldsymbol{\S}_5, \boldsymbol{\S}_6$ satisfy the following closed algebra:

\begin{align*}
[\boldsymbol{\S}, \boldsymbol{\S}'] &= \boldsymbol{\S}_3, &
[\boldsymbol{\S}, \boldsymbol{\S}_3] &= \boldsymbol{\S}_4, \\
[\boldsymbol{\S}', \boldsymbol{\S}_3] &= \boldsymbol{\S}_5, &
[\boldsymbol{\S}, \boldsymbol{\S}_4] &= \boldsymbol{\S}_6, \\
[\boldsymbol{\S}', \boldsymbol{\S}_4] &= -12 \boldsymbol{\S}_3 - 2 \boldsymbol{\S}_6, &
[\boldsymbol{\S}_3, \boldsymbol{\S}_4] &= 16 \boldsymbol{\S}, \\
[\boldsymbol{\S}, \boldsymbol{\S}_5] &= -12 \boldsymbol{\S}_3 - 2 \boldsymbol{\S}_6, &
[\boldsymbol{\S}', \boldsymbol{\S}_5] &= -12 \boldsymbol{\S}_3 - \boldsymbol{\S}_6, \\
[\boldsymbol{\S}_3, \boldsymbol{\S}_5] &= 16 \boldsymbol{\S}', &
[\boldsymbol{\S}_4, \boldsymbol{\S}_5] &= 16 \boldsymbol{\S}_3, \\
[\boldsymbol{\S}, \boldsymbol{\S}_6] &= -\tfrac{32}{5} \boldsymbol{\S} - \tfrac{16}{5} \boldsymbol{\S}' - \tfrac{36}{5} \boldsymbol{\S}_4 + \tfrac{12}{5} \boldsymbol{\S}_5, &
[\boldsymbol{\S}', \boldsymbol{\S}_6] &= -\tfrac{16}{5} \boldsymbol{\S} + \tfrac{32}{5} \boldsymbol{\S}' + \tfrac{12}{5} \boldsymbol{\S}_4 - \tfrac{24}{5} \boldsymbol{\S}_5, \\
[\boldsymbol{\S}_3, \boldsymbol{\S}_6] &= 0, &
[\boldsymbol{\S}_4, \boldsymbol{\S}_6] &= \tfrac{576}{5} \boldsymbol{\S} - \tfrac{192}{5} \boldsymbol{\S}' - \tfrac{32}{5} \boldsymbol{\S}_4 - \tfrac{16}{5} \boldsymbol{\S}_5, \\
[\boldsymbol{\S}_5, \boldsymbol{\S}_6] &= -\tfrac{192}{5} \boldsymbol{\S} + \tfrac{384}{5} \boldsymbol{\S}' - \tfrac{16}{5} \boldsymbol{\S}_4 + \tfrac{32}{5} \boldsymbol{\S}_5.
\end{align*}

\subsection{Clifford Algebra Embedding}

The \(\mathfrak{so}(2L)\) algebra can be embedded into the even part of the Clifford algebra $Cl(2L)$, using the identity 
\begin{equation}
X_{ij}=\frac{1}{4}[\gamma_i,\gamma_j]
\end{equation}
where $\{\gamma_i,\gamma_j\}=2\delta_{ij}\bold{I}$. This helps to write the $\boldsymbol{\Sigma}$ and $\boldsymbol{\Sigma}'$ matrices with respect to the $\gamma$ matrices of the Clifford algebra. We note that if one finds a single representation of the $\gamma$ matrices composed of $+1$, $-1$, and $0$, then a total of $2^{2L-1}$ such representations can be constructed. This arises from the freedom to perform the transformation $\gamma_i \to \epsilon_i \gamma_i$, where $\epsilon_i^2 = 1$. However, flipping the sign of all $\gamma_i$ matrices results in an equivalent representation. Consequently, there are $2^{2L-1}$ distinct choices for the $\gamma$ matrices. This indicates again the possibility of $2^{2L-1}$ distinct choices for the $\boldsymbol{\Sigma}$ and $\boldsymbol{\Sigma}'$ matrices.
\section{Conclusion}\label{sec:conclusion}
In this work, we have developed a fully unified Pfaffian framework for evaluating matrix elements of arbitrary fermionic Gaussian operators—whether pure unitaries, or mixed‐state density matrices, in any product Pauli basis. By introducing a discrete pair of sign–encoding matrices with entries in \(\{\pm1\}\), we have resolved all phase ambiguities inherent in Pfaffian expressions under local spin rotations and reduced every overlap or expectation value to a single Pfaffian of a \(2L\times2L\) kernel. This construction not only subsumes and generalizes every previously known formula in special bases, but also reveals an elegant algebraic structure: under commutation, our sign matrices generate a Lie algebra isomorphic to \(\mathfrak{so}(2L)\). The practical payoff is substantial. Basis‐rotated simulation of free‐fermion dynamics, randomized (shadow) measurement protocols, and even online post‐measurement state updates all collapse from exponentially large Pauli‐string expansions into \(O(L^3)\) linear‐algebra routines.

Looking forward, this Pfaffian machinery opens the door to several concrete advances. One can now fit Gaussian covariances directly from randomized‐measurement data via maximum‐likelihood or Bayesian inference—each shot contributing one Pfaffian term—bypassing the need for multi‐basis fixed tomography. Real‐time quench and transport calculations reduce to orthogonal rotations of a covariance followed by Pfaffians, enabling simulations of entanglement growth and Loschmidt echoes at scales previously out of reach. Moreover, weak non‐Gaussian perturbations or error‐mitigation channels can be embedded into this Gaussian core as low‐rank updates to the Pfaffian kernel, preserving \(O(L^3)\) efficiency. In all these applications, the  complexity of free‐fermion systems gives way to the manipulation of a single \(2L\times2L\) matrix and modest Pfaffian evaluations, offering both analytic transparency and scalable performance across quantum information and many‐body physics.
\section*{Acknowledgements}
 We thank M. Heyl for reading the manuscript and
comments. MAR thanks CNPq and FAPERJ (grant number E-26/210.062/2023) for partial support.

\newpage

\appendix
\section*{\centering \Huge \textbf{Appendix}}
\addcontentsline{toc}{section}{Appendix}
\vspace*{2em}  
\addtocontents{toc}{\protect\setcounter{tocdepth}{1}}
\renewcommand{\theequation}{\thesection.\arabic{equation}}

\setcounter{equation}{0}
\setcounter{table}{0}
\renewcommand{\thetable}{A\arabic{table}}

\section{Examples of the $\Sigma$ and $\Sigma'$ matrices }\label{sec:AppendixA}
In this Appendix we provide explicit forms of the $\boldsymbol{\Sigma}$ and $\boldsymbol{\Sigma}'$ matrices for $L=2$ and $3$. Half of the list of  the possible pairs for $L=2$ are in the Table \ref{Table:L2}, the other half can be found by just taking the negative of the presented matrices.

For $L=3$ there are $32$ possible pairs. We show in the Table \ref{Table:L3} four of them. All the possible cases can be found using the procedure outlined in the main text.

\begin{table}[ht]\label{Table:L2}
\centering
\caption{Four explicit pairs $(\boldsymbol{\Sigma},\boldsymbol{\Sigma}')$ for $L=2$. Pairs 5--8 are obtained by taking the overall negative of pairs 1--4, respectively.}
\begin{tabular}{c c c}
\toprule
Pair & $\boldsymbol{\Sigma}$ & $\boldsymbol{\Sigma}'$ \\
\midrule
1 & 
$\begin{pmatrix}
0 & 1 & 1 & -1\\[4mm]
-1 & 0 & 1 & -1\\[4mm]
-1 & -1 & 0 & -1\\[4mm]
1 & 1 & 1 & 0
\end{pmatrix}$ &
$\begin{pmatrix}
0 & 1 & 1 & 1\\[4mm]
-1 & 0 & -1 & -1\\[4mm]
-1 & 1 & 0 & 1\\[4mm]
-1 & 1 & -1 & 0
\end{pmatrix}$ \\[6mm]
\midrule
2 & 
$\begin{pmatrix}
0 & 1 & -1 & 1\\[4mm]
-1 & 0 & -1 & 1\\[4mm]
1 & 1 & 0 & -1\\[4mm]
-1 & -1 & 1 & 0
\end{pmatrix}$ &
$\begin{pmatrix}
0 & 1 & -1 & -1\\[4mm]
-1 & 0 & 1 & 1\\[4mm]
1 & -1 & 0 & 1\\[4mm]
1 & -1 & -1 & 0
\end{pmatrix}$ \\[6mm]
\midrule
3 & 
$\begin{pmatrix}
0 & -1 & 1 & 1\\[4mm]
1 & 0 & -1 & -1\\[4mm]
-1 & 1 & 0 & 1\\[4mm]
-1 & 1 & -1 & 0
\end{pmatrix}$ &
$\begin{pmatrix}
0 & -1 & 1 & -1\\[4mm]
1 & 0 & 1 & -1\\[4mm]
-1 & -1 & 0 & -1\\[4mm]
1 & 1 & 1 & 0
\end{pmatrix}$ \\[6mm]
\midrule
4 & 
$\begin{pmatrix}
0 & -1 & -1 & -1\\[4mm]
1 & 0 & 1 & 1\\[4mm]
1 & -1 & 0 & 1\\[4mm]
1 & -1 & -1 & 0
\end{pmatrix}$ &
$\begin{pmatrix}
0 & -1 & -1 & 1\\[4mm]
1 & 0 & -1 & 1\\[4mm]
1 & 1 & 0 & -1\\[4mm]
-1 & -1 & 1 & 0
\end{pmatrix}$ \\
\bottomrule
\end{tabular}
\label{tab:SigmaPairs}
\end{table}

\begin{table}[htbp]\label{Table:L3}
\centering
\caption{Four representative pairs $(\boldsymbol{\Sigma},\boldsymbol{\Sigma}')$ for $L=3$.
The other 28 pairs are obtained by overall sign changes and are algebraically equivalent.}
\begin{tabular}{c c c}
\toprule
Pair & $\boldsymbol{\Sigma}$ & $\boldsymbol{\Sigma}'$ \\
\midrule
1 & 
$\displaystyle
\begin{pmatrix}
0 & 1 & 1 & 1 & 1 & -1\\[2mm]
-1 & 0 & 1 & 1 & 1 & -1\\[2mm]
-1 & -1 & 0 & 1 & 1 & -1\\[2mm]
-1 & -1 & -1 & 0 & 1 & -1\\[2mm]
-1 & -1 & -1 & -1 & 0 & -1\\[2mm]
1 & 1 & 1 & 1 & 1 & 0
\end{pmatrix}
$ 
&
$\displaystyle
\begin{pmatrix}
0 & 1 & -1 & 1 & -1 & -1\\[2mm]
-1 & 0 & 1 & -1 & 1 & 1\\[2mm]
1 & -1 & 0 & 1 & -1 & -1\\[2mm]
-1 & 1 & -1 & 0 & -1 & -1\\[2mm]
1 & -1 & 1 & 1 & 0 & 1\\[2mm]
1 & -1 & 1 & 1 & -1 & 0
\end{pmatrix}
$ \\[4mm]
\midrule
2 & 
$\displaystyle
\begin{pmatrix}
0 & 1 & 1 & -1 & -1 & -1\\[2mm]
-1 & 0 & 1 & -1 & -1 & -1\\[2mm]
-1 & -1 & 0 & -1 & -1 & -1\\[2mm]
1 & 1 & 1 & 0 & 1 & 1\\[2mm]
1 & 1 & 1 & -1 & 0 & 1\\[2mm]
1 & 1 & 1 & -1 & -1 & 0
\end{pmatrix}
$ 
&
$\displaystyle
\begin{pmatrix}
0 & 1 & -1 & -1 & 1 & -1\\[2mm]
-1 & 0 & 1 & 1 & -1 & 1\\[2mm]
1 & -1 & 0 & -1 & 1 & -1\\[2mm]
1 & -1 & 1 & 0 & -1 & 1\\[2mm]
-1 & 1 & -1 & 1 & 0 & -1\\[2mm]
1 & -1 & 1 & -1 & 1 & 0
\end{pmatrix}
$ \\[4mm]
\midrule
3 & 
$\displaystyle
\begin{pmatrix}
0 & -1 & 1 & -1 & 1 & -1\\[2mm]
1 & 0 & -1 & 1 & -1 & 1\\[2mm]
-1 & 1 & 0 & -1 & 1 & -1\\[2mm]
1 & -1 & 1 & 0 & -1 & 1\\[2mm]
-1 & 1 & -1 & 1 & 0 & -1\\[2mm]
1 & -1 & 1 & -1 & 1 & 0
\end{pmatrix}
$ 
&
$\displaystyle
\begin{pmatrix}
0 & -1 & -1 & -1 & -1 & -1\\[2mm]
1 & 0 & -1 & -1 & -1 & -1\\[2mm]
1 & 1 & 0 & -1 & -1 & -1\\[2mm]
1 & 1 & 1 & 0 & 1 & 1\\[2mm]
1 & 1 & 1 & -1 & 0 & 1\\[2mm]
1 & 1 & 1 & -1 & -1 & 0
\end{pmatrix}
$ \\[4mm]
\midrule
4 & 
$\displaystyle
\begin{pmatrix}
0 & 1 & -1 & -1 & -1 & 1\\[2mm]
-1 & 0 & -1 & -1 & -1 & 1\\[2mm]
1 & 1 & 0 & 1 & 1 & -1\\[2mm]
1 & 1 & -1 & 0 & 1 & -1\\[2mm]
1 & 1 & -1 & -1 & 0 & -1\\[2mm]
-1 & -1 & 1 & -1 & 1 & 0
\end{pmatrix}
$ 
&
$\displaystyle
\begin{pmatrix}
0 & 1 & 1 & -1 & 1 & 1\\[2mm]
-1 & 0 & -1 & 1 & -1 & -1\\[2mm]
-1 & 1 & 0 & 1 & -1 & -1\\[2mm]
1 & -1 & -1 & 0 & -1 & -1\\[2mm]
-1 & 1 & 1 & 1 & 0 & 1\\[2mm]
-1 & 1 & 1 & 1 & -1 & 0
\end{pmatrix}
$ \\
\bottomrule
\end{tabular}
\label{tab:SigmaPairsL3}
\end{table}
\newpage
\section{Proof of Theorem 5: Lie Algebra Generation by \texorpdfstring{\(\boldsymbol{\Sigma}\)}{Sigma} and \texorpdfstring{\(\boldsymbol{\Sigma}'\)}{Sigma'}}\label{sec:AppendixB}

In this appendix, we provide a detailed proof of Theorem 5.  
We aim to prove that, for any integer \(L \geq 3\), the two explicitly constructed antisymmetric matrices  
\(\boldsymbol{\Sigma}, \boldsymbol{\Sigma}' \in \mathfrak{so}(2L)\)  
generate the entire Lie algebra \(\mathfrak{so}(2L)\) under successive commutators. That is, starting from \(\boldsymbol{\Sigma}\) and \(\boldsymbol{\Sigma}'\), and repeatedly taking commutators, one obtains a vector space of antisymmetric matrices of dimension \(L(2L - 1)\).

\subsection{Definitions}

This section introduces the key definitions and constructs used in the analysis, including the sign vector and two associated antisymmetric matrices. We now define the matrices \(\boldsymbol{\Sigma}\) and \(\boldsymbol{\Sigma}'\), both of which are real and antisymmetric.
\begin{itemize}

  \item \textbf{Matrix \(\boldsymbol{\Sigma}\)} 
  \begin{equation}
    \Sigma_{ij} = \begin{cases} 
      p_i p_j & i < j, \\ 
      -p_j p_i & i > j, \\ 
      0 & i = j. 
    \end{cases}
  \end{equation}
  where \(p = (p_1, \dots, p_{2L})\) with each \(p_i \in \{+1, -1\}\).
  
  \item \textbf{Matrix \(\boldsymbol{\Sigma}'\)}
  \begin{equation}
    \Sigma'_{ij} = \begin{cases} 
      \operatorname{sgn}(i,j) p_i p_j & i < j, \\ 
      -\operatorname{sgn}(j,i) p_j p_i & i > j, \\ 
      0 & i = j. 
    \end{cases}
  \end{equation}
To define the sign function $\operatorname{sgn}(i,j)$, we first introduce the function \( f(i,j) \), which depends on the parity of \( L \). For odd \( L \), we define
\begin{equation}
f(i,j)=
\begin{cases}
i+j+1, & (i\le L\ \&\ j\le L)\;\text{or}\;(i\le L<j),\\[6pt]
i+j+2, & i>L\ \text{and}\ j> L .
\end{cases}
\end{equation}
while for even \( L \), we define
\begin{equation}
f(i,j)=
\begin{cases}
i+j+1, & (i\le L\ \&\ j\le L),\\[6pt]
i+j+2, & i>L\ \text{and}\ j> L \;\text{or}\;(i\le L<j).
\end{cases}
\end{equation}
Using this, we define the sign function
\begin{equation}\label{eq:sgn}
\operatorname{sgn}(i,j) = (-1)^{f(i,j)}, \qquad 1 \le i,j \le 2L~.
\end{equation}
  \item \textbf{Matrices \(\tilde{\boldsymbol{\Sigma}}\) and \(\tilde{\boldsymbol{\Sigma}}'\)
  }
  \begin{equation}
    {\boldsymbol{\Sigma}} = \boldsymbol{g}\,\tilde{\boldsymbol{\Sigma}}\,\boldsymbol{g}^{T},
    \qquad
    {\boldsymbol{\Sigma}}' = \boldsymbol{g}\,\tilde{\boldsymbol{\Sigma}}'\,\boldsymbol{g}^{T},
  \end{equation}
  where \(\boldsymbol{g} = \boldsymbol{P}\boldsymbol{g}_0 \) is an orthogonal matrix with
  \begin{equation}
  \boldsymbol{g}_0 = 
  \bigl[
    \phi^{(1)}, \psi^{(1)}, \dots, \phi^{(L)}, \psi^{(L)}
  \bigr], \quad
  \boldsymbol{P} = \operatorname{diag}(p_1, \dots, p_{2L}),
  \end{equation}
  and
  \begin{equation}
  \phi^{(k)}_j = \sqrt{\tfrac{1}{L}} \sin\!\left(\tfrac{(j-1)(2k-1)\pi}{2L}\right),
  \quad
  \psi^{(k)}_j = \sqrt{\tfrac{1}{L}} \cos\!\left(\tfrac{(j-1)(2k-1)\pi}{2L}\right).
  \end{equation}
\end{itemize}

\subsection{A Theorem}
In this subsection, we use the Two-Element Generation Criterion Theorem ~\cite{Kuranishi1951,Humphreys1972} to prove that any two suitably chosen antisymmetric matrices  
\(\boldsymbol{\Sigma}, \boldsymbol{\Sigma}' \in \mathfrak{so}(2L)\)  
generate the entire Lie algebra \(\mathfrak{so}(2L)\) under successive commutators.

Let \(\mathfrak{g}\) be a simple complex Lie algebra of rank \(r\). Fix a Cartan subalgebra \(\mathfrak{h} \subset \mathfrak{g}\), with root decomposition
\begin{equation}
  \mathfrak{g} = \mathfrak{h} \oplus \bigoplus_{\alpha \in \Delta} \mathfrak{g}_\alpha,
\end{equation}
where each \(\mathfrak{g}_\alpha\) is the one-dimensional root space corresponding to root \(\alpha\).
%
If the following conditions hold:

\begin{enumerate}
  \item \textbf{Regular semisimplicity:} An element \(\boldsymbol{H} \in \mathfrak{h}\) is called regular if \(\alpha(\boldsymbol{H}) \neq 0\) for all roots \(\alpha \in \Delta\). Equivalently, the centralizer of \(\boldsymbol{H}\) is exactly \(\mathfrak{h}\).

  \item \textbf{Extra-root condition:} An element \(\boldsymbol{E} \in \mathfrak{g}\) has a nonzero projection onto each of the \(r\) simple-root spaces \(\mathfrak{g}_{\alpha_i}\) for the chosen simple-root system \(\{\alpha_1, \dots, \alpha_r\}\).
\end{enumerate}

Then the Lie subalgebra generated by \(\boldsymbol{H}\) and \(\boldsymbol{E}\), i.e.,
\begin{equation}
  \mathrm{Lie}\{\boldsymbol{H}, \boldsymbol{E}\},
\end{equation}
is the full Lie algebra \(\mathfrak{g}\). In particular, it has dimension
\begin{equation}
  \dim \mathfrak{g} = r + |\Delta|.
\end{equation}
A detailed exposition of this result and its applications within the structure theory of semisimple Lie algebras can be found in~\cite{Kuranishi1951,Humphreys1972}.
%
%
To prove that the Lie algebra generated by \(\boldsymbol{\Sigma}\) and \(\boldsymbol{\Sigma}'\) is all of \(\mathfrak{so}(2L)\), we apply the Two-Element Generation Criterion Theorem. For this purpose, we consider the conjugated matrices
\begin{equation}
  \boldsymbol{H} := \tilde{\boldsymbol{\Sigma}} \qquad \text{and} \qquad \boldsymbol{E} := \tilde{\boldsymbol{\Sigma}}',
\end{equation}
and verify that they satisfy the two structural conditions of the theorem within the Lie algebra \(\mathfrak{so}(2L)\).

\subsubsection{\texorpdfstring{\(\boldsymbol{\Sigma}\)}{Sigma} is regular semisimple}
\begin{enumerate}
  \item \textbf{Spectrum:} 
The antisymmetric matrix $\boldsymbol{\Sigma}$ has purely imaginary eigenvalues
\begin{equation}
  \lambda_k = i\,\omega_k,
  \quad
  \lambda_{k+L}=-\lambda_k,
  \quad
  \omega_k = -\cot\!\Bigl(\frac{(2k-1)\,\pi}{4L}\Bigr),
  \quad
  k=1,\dots,L,
\end{equation}
all distinct.  In fact, one can check that
\begin{equation}
  \tilde{\boldsymbol{\Sigma}}=\boldsymbol{g}^{T}\,\boldsymbol{\Sigma}\,\boldsymbol{g}
  \;=\;
  \bigoplus_{k=1}^{L}
  \begin{pmatrix}
    0 & \omega_k \\[4pt]
   -\omega_k & 0
  \end{pmatrix},
\end{equation}
  \item \textbf{Centralizer:} One checks explicitly that
  \begin{equation}
    [\tilde{\boldsymbol{\Sigma}}, \boldsymbol{X}_{2i-1,2i}] = 0 \quad \text{for } i=1,\dots,L,
  \end{equation}
  but \([\tilde{\boldsymbol{\Sigma}}, \boldsymbol{X}_{jk}] \ne 0\) for the rest. Hence
  \begin{equation}
    \text{Cent}(\tilde{\boldsymbol{\Sigma}}) = \mathrm{span}\{\boldsymbol{H}_i = \boldsymbol{X}_{2i-1,2i}\}_{i=1}^L
  \end{equation}
  is exactly a Cartan subalgebra of dimension \(L\).
\end{enumerate}
Together, these confirm that \(\boldsymbol{\Sigma}\) is a regular semisimple element, fulfilling the first condition of the theorem.

\subsubsection{\texorpdfstring{\(\boldsymbol{\Sigma}'\)}{Sigma'} has nonzero simple-root components}
The matrix $\tilde{\boldsymbol{\Sigma}}'$ can be written as

\begin{equation}
\tilde{\Sigma}'_{\mu\nu}
  =\sum_{r=1}^{2L}\sum_{s=1}^{2L} S_{rs}\,
    \begin{cases}
      \phi^{(\frac{\mu + 1}{2})}_r\,\phi^{(\frac{\nu + 1}{2})}_s, & \mu,\nu\text{ odd},\\[8pt]
      \phi^{(\frac{\mu + 1}{2})}_r\,\psi^{(\frac{\nu}{2})}_s,    & \mu\text{ odd},\;\nu\text{ even},\\[8pt]
      \psi^{(\frac{\mu}{2})}_r\,  \phi^{(\frac{\nu + 1}{2})}_s,  & \mu\text{ even},\;\nu\text{ odd},\\[8pt]
      \psi^{(\frac{\mu}{2})}_r\,  \psi^{(\frac{\nu}{2})}_s,     & \mu,\nu\text{ even}.
    \end{cases}
\end{equation}
where the antisymmetric sign matrix \( S_{ij} \) is
\begin{equation}
S_{ij} =
\begin{cases}
\operatorname{sgn}(i,j),  & i < j, \\[4pt]
-\operatorname{sgn}(j,i), & i > j, \\[4pt]
0,                        & i = j.
\end{cases}
\qquad (1 \le i,j \le 2L).
\end{equation}
and $\operatorname{sgn}(i,j)$ is defined in Eq.~\eqref{eq:sgn}.
For the Cartan subalgebra 
\begin{equation}
H_i \;=-i\;X_{\,2i-1,\,2i}\,,\qquad i=1,\dots,L,
\end{equation}
a convenient simple–root system of \(\mathfrak{so}(2L)\) is generated by  
\begin{eqnarray}
E_{\alpha_i} &=&\frac{1}{2}( X_{\,2i-1,\,2i+1} \;+\; X_{\,2i,\,2(i+1)}-iX_{2i-1,2i+2}+iX_{2i,2i+1}),
      \hspace{0.3cm} i = 1,\dots ,L-1,\\[6pt]
E_{\alpha_L} &=&\frac{1}{2}(- X_{\,2L-3,\,2L-1} \;+\; X_{\,2L-2,\,2L}-iX_{2L-3,2L}-iX_{2L-2,2L-1}).
\end{eqnarray}
We aim to show that the matrix $\tilde{\boldsymbol{\Sigma}}'$ has nonzero overlap with all the simple root generators $E_{\alpha_i}$. To this end, we use the Hilbert–Schmidt inner product on the Lie algebra $\mathfrak{so}(2L)$, defined as
\begin{equation}
\langle A, B \rangle = 
\tfrac{1}{2} \, \mathrm{tr}(A^\dagger B)
= \sum_{p<q} A^*_{pq} \, B_{pq},
\qquad A, B \in \mathfrak{so}(2L).
\end{equation}
Before proceeding with the overlap computation, we first prove a useful identity involving the standard basis elements of the Lie algebra. Let $M$ be any complex antisymmetric $2L \times 2L$ matrix, i.e., $M_{ij} = -M_{ji}$ and $M_{ii} = 0$. Consider the elementary generator $X_{a,b}$ defined by
\begin{equation}
(X_{a,b})_{mn} = \delta_{a,m}\delta_{b,n} - \delta_{a,n}\delta_{b,m},
\quad \text{so that} \quad X_{a,b} = - X_{b,a}.
\end{equation}
We now compute the inner product between $X_{a,b}$ and $M$:
\begin{align}
\langle X_{a,b}, M \rangle 
&= \sum_{i<j} (X_{a,b})^*_{ij} \, M_{ij} \notag \\
&= \sum_{i<j} (X_{a,b})_{ij} \, M_{ij}
\quad \text{(since $X_{a,b}$ is real)} \notag \\
&= (X_{a,b})_{a,b} \, M_{a,b} 
+ \sum_{\substack{i<j \\ (i,j) \ne (a,b)}} (X_{a,b})_{ij} \, M_{ij}.
\end{align}
By construction of $X_{a,b}$, we have:
\begin{equation}
(X_{a,b})_{ij} =
\begin{cases}
+1, & \text{if } (i,j) = (a,b),\\
-1, & \text{if } (i,j) = (b,a),\\
0, & \text{otherwise}.
\end{cases}
\end{equation}
However, in the sum over $i < j$, the pair $(b,a)$ does not appear (since $b > a$), so only the $(a,b)$ term contributes:
\begin{equation}
\langle X_{a,b}, M \rangle = M_{ab}.
\end{equation}
Using that one can show for \(l=1,\dots ,L-1\),
\begin{align}
\langle E_{\alpha_l},\,\tilde{\boldsymbol{\Sigma}}' \rangle &=\;
\frac{1}{2}\, \tilde{\Sigma}'_{2l-1,\,2l+1}
+ \frac{1}{2}\, \tilde{\Sigma}'_{2l,\,2l+2}
+ \frac{i}{2}\, \tilde{\Sigma}'_{2l-1,\,2l+2}
- \frac{i}{2}\, \tilde{\Sigma}'_{2l,\,2l+1} \nonumber\\[6pt]
&= \frac{1}{2} \sum_{r,s=1}^{2L} S_{rs} \left[
\phi^{(2l-1)}_r\,\phi^{(2l+1)}_s
+ \psi^{(2l)}_r\,\psi^{(2l+2)}_s
+ i\,\phi^{(2l-1)}_r\,\psi^{(2l+2)}_s
- i\,\psi^{(2l)}_r\,\phi^{(2l+1)}_s
\right]
\end{align}
Now insert the definitions of $\phi^{(k)}_j$ and $\psi^{(k)}_j$:
\begin{align}
\langle E_{\alpha_l},\,\tilde{\boldsymbol{\Sigma}}' \rangle
=& \frac{1}{2L} \sum_{r,s=1}^{2L} S_{rs} \bigg[
 \sin\!\left(\tfrac{(r-1)(2l - 1)\pi}{2L}\right)
  \sin\!\left(\tfrac{(s-1)(2l + 1)\pi}{2L}\right)\nonumber \\[4pt]
+& \cos\!\left(\tfrac{(r-1)(2l - 1)\pi}{2L}\right)
   \cos\!\left(\tfrac{(s-1)(2l + 1)\pi}{2L}\right)+ i \sin\!\left(\tfrac{(r-1)(2l - 1)\pi}{2L}\right)
     \cos\!\left(\tfrac{(s-1)(2l + 1)\pi}{2L}\right)\nonumber \\[4pt]
-& i \cos\!\left(\tfrac{(r-1)(2l - 1)\pi}{2L}\right)
     \sin\!\left(\tfrac{(s-1)(2l + 1)\pi}{2L}\right)
\bigg]=\frac{1}{2L}\sum_{r,s=1}^{2L}e^{-i\tfrac{(s-1)(2l + 1)\pi}{2L}}S_{r,s} e^{i\tfrac{(r-1)(2l - 1)\pi}{2L}}\nonumber\\
=&-\frac{e^{-\frac{i(l-1)\pi }{L}} \left(-1+e^{\frac{2il\pi }{L}}\right)}{\left(-1+e^{\frac{i \pi }{L}}\right) L \left(\cos \left(\frac{\pi }{2 L}\right)+\cos \left(\frac{l \pi }{L}\right)\right)}
\end{align}

and for the last generator
\begin{align}
\langle E_{\alpha_L},\,\tilde{\boldsymbol{\Sigma}}' \rangle &=\;
- \frac{1}{2}\, \tilde{\Sigma}'_{2L-3,\,2L-1}
+ \frac{1}{2}\, \tilde{\Sigma}'_{2L-2,\,2L}
+ \frac{i}{2}\, \tilde{\Sigma}'_{2L-3,\,2L}
+ \frac{i}{2}\, \tilde{\Sigma}'_{2L-2,\,2L-1}\nonumber \\[6pt]
&= \sum_{r,s=1}^{2L} S_{rs} \left[
-\tfrac{1}{2} \phi^{(2L-3)}_r\,\phi^{(2L-1)}_s
+ \tfrac{1}{2} \psi^{(2L-2)}_r\,\psi^{(2L)}_s
+ \tfrac{i}{2} \phi^{(2L-3)}_r\,\psi^{(2L)}_s
+ \tfrac{i}{2} \psi^{(2L-2)}_r\,\phi^{(2L-1)}_s
\right]
\end{align}
By replacing $\phi^{(k)}_j$ and $\psi^{(k)}_j$ one has
\begin{align}
\langle E_{\alpha_L},\,\tilde{\boldsymbol{\Sigma}}' \rangle
& = \frac{1}{2L} \sum_{r,s=1}^{2L} S_{rs} \bigg[
 - \sin\!\left(\tfrac{(r-1)(2L - 3)\pi}{2L}\right)
     \sin\!\left(\tfrac{(s-1)(2L - 1)\pi}{2L}\right)\nonumber \\[4pt]
+& \cos\!\left(\tfrac{(r-1)(2L - 3)\pi}{2L}\right)
     \cos\!\left(\tfrac{(s-1)(2L - 1)\pi}{2L}\right)+ i \sin\!\left(\tfrac{(r-1)(2L - 3)\pi}{2L}\right)
     \cos\!\left(\tfrac{(s-1)(2L - 1)\pi}{2L}\right) \nonumber\\[4pt]
+& i \cos\!\left(\tfrac{(r-1)(2L - 3)\pi}{2L}\right)
     \sin\!\left(\tfrac{(s-1)(2L - 1)\pi}{2L}\right)
\bigg]=\frac{1}{2L} \sum_{r,s=1}^{2L} e^{i\tfrac{(s-1)(2L - 1)\pi}{2L}} S_{rs} e^{i\tfrac{(r-1)(2L - 3)\pi}{2L}}\nonumber\\
=&
\frac{i(-1)^{L}e^{\frac{i \pi }{L}} \csc ^2\left(\frac{\pi }{4 L}\right)}{2L(2 \cos \left(\frac{\pi }{2 L}\right)+1)}
\end{align}
None of the above overlaps can be zero.
\subsubsection{Conclusion}

All hypotheses of the Two-Element Generation Criterion Theorem are satisfied:
\begin{itemize}
  \item \(\mathfrak{so}(2L)\) is simple for \(L\ge3\).
  \item \(\boldsymbol{\Sigma}\) is regular semisimple, with centralizer equal to the Cartan subalgebra of dimension \(L\).
  \item \(\boldsymbol{\Sigma}'\) has nonzero projection onto every simple-root space \(\mathfrak{g}_{\alpha_i}\).
\end{itemize}
It follows that the Lie subalgebra generated by \(\boldsymbol{\Sigma}\) and \(\boldsymbol{\Sigma}'\) under iterated commutators is all of \(\mathfrak{so}(2L)\):
\[
  \mathrm{Lie} \{ \boldsymbol{\Sigma}, \boldsymbol{\Sigma}' \} = \mathfrak{so}(2L).
\]
Therefore, these two matrices generate all \(L(2L-1)\) independent antisymmetric matrices.

\end{document}